\newcommand{\macrospath}{./macros}
\newcommand{\ignore}[1]{}
\newcommand{\colspace}{@{\hspace{.3cm}}}
\newcommand{\myinput}[1]{\ifthenelse{\boolean{withimages}}{\input{#1}}{}}
\newcommand{\reflemma}[1]{Lemma~\ref{l:#1}}
\newcommand{\reflemmap}[2]{Lemma~\ref{l:#1}.\ref{p:#1-#2}}
\newcommand{\refpoint}[1]{Point~\ref{p:#1}}
\newcommand{\refthm}[1]{Thm.~\ref{thm:#1}}
\newcommand{\refsect}[1]{Sect.~\ref{sect:#1}}
\newcommand{\refapp}[1]{Appendix~\ref{sect:#1} (p.~\pageref{sect:#1})}
\renewcommand{\refeq}[1]{(\ref{eq:#1})} %renewcommand to avoid conflict with package mathtools
\newcommand{\reffig}[1]{Fig.~\ref{fig:#1}}
\newcommand{\refdef}[1]{Def.~\ref{def:#1}}
\newcommand{\ie}{\textit{i.e.}\xspace}
\newcommand{\eg}{\textit{e.g.}\xspace}
\newcommand{\ih}{\textit{i.h.}\xspace}
\newcommand{\ES}{\text{ES}\xspace}
\newcommand{\ESs}{\text{ESs}\xspace}
\newcommand{\defeq}{\coloneqq} %Giulio
\newcommand{\grameq}{\Coloneqq} %Giulio
\newcommand{\set}[1]{\{#1\}}
\newcommand{\nat}{\mathbb{N}}
\newcommand{\size}[1]{|#1|}
\newcommand{\mulsym}{\msym} %Giulio
\renewcommand{\l}{\lambda}
\newcommand{\isub}[2]{\{#1/#2\}}
\newcommand{\replace}[2]{#1{\shortleftarrow}#2}
\renewcommand{\isub}[2]{\{\replace{#1}{#2}\}}
\newcommand{\esub}[2]{[\replace{#1}{#2}]}
\renewcommand{\esub}[2]{[#1{\shortleftarrow}#2]}
\newcommand{\fv}[1]{{\sf fv}(#1)}
\newcommand{\afv}[1]{{\sf afv}(#1)}
\newcommand{\nf}{\mathsf{nf}} %Giulio
\newcommand{\nfo}[1]{\nf_\osym(#1)} %Giulio
\newcommand{\rootRew}[1]{\mapsto_{#1}}
\newcommand{\Rew}[1]{\rightarrow_{#1}}
\newcommand{\lRew}[1]{\; \mbox{}_{#1}{\leftarrow}\ }
\newcommand{\lto}{\lRew{}}
\newcommand{\rtom}{\rootRew{\msym}}
\newcommand{\rtoe}{\rootRew{\esym}}
\newcommand{\esym}{{\mathsf e}}
\newcommand{\rsym}{{\mathsf r}}
\newcommand{\msym}{\mathsf{m}}
\newcommand{\psym}{{\mathsf p}}
\newcommand{\shufeqext}{\shufeqext} %Giulio
\newcommand{\tom}{\Rew{\msym}}
 \newcommand{\toe}{\Rew{\esym}}
\newcommand{\torm}{\Rew{\rsym\msym}}
\newcommand{\tore}{\Rew{\rsym\esym}}
\newcommand{\tm}{t}
\newcommand{\tmtwo}{u}
\newcommand{\tmthree}{r}
\newcommand{\tmfour}{q}
\newcommand{\tmfirst}{\tm_1}
\newcommand{\tmsec}{\tm_2}
\newcommand{\tmtwop}{\tmtwo'}
\newcommand{\var}{x}
\newcommand{\varp}{x'}
\newcommand{\vartwo}{y}
\newcommand{\vartwop}{y'}
\newcommand{\varthree}{z}
\newcommand{\varfour}{w}
\newcommand{\varfive}{\varp}
\newcommand{\varsix}{\vartwop}
\newcommand{\ctxholep}[1]{\langle #1\rangle}
\newcommand{\ctxhole}{\ctxholep{\cdot}}
\newcommand{\arbctxp}[1]{\arbctxp{#1}}
\newcommand{\arbctxtwop}[1]{\arbctxtwop{#1}}
\newcommand{\evctx}{\genevctx}
\newcommand{\evctxtwo}{\evctx'}
\newcommand{\evctxthree}{\evctx''}
\newcommand{\evctxp}[1]{\evctx\ctxholep{#1}}
\newcommand{\evctxtwop}[1]{\evctxtwo\ctxholep{#1}}
\newcommand{\evctxthreep}[1]{\evctxthree\ctxholep{#1}}
\newcommand{\evctxfour}{\evctx'''}
\newcommand{\tomachhole}[1]{\leadsto_{#1}}
\newcommand{\tomach}{\tomachhole{}}
\newcommand{\tomachm}{\tomachhole{\mulsym}}
\newcommand{\tomache}{\tomachhole{\esym}}
\newcommand{\tomachseaone}{\tomachhole{\seasym_1}}
\newcommand{\tomachseatwo}{\tomachhole{\seasym_2}}
\newcommand{\tomachseathree}{\tomachhole{\seasym_3}}
\newcommand{\genv}{\symfont{E}}
\newcommand{\genvtwo}{\genv'}
\newcommand{\genvthree}{\genv''}
\newcommand{\cons}{:}
  \renewcommand{\state}{s}
\newcommand{\state}{\symfont{Q}}
\newcommand{\statetwo}{\state'}
\newcommand{\statethree}{\state''}
\newcommand{\statefour}{\state'''}
\newcommand{\rename}[1]{\renamenop{#1}}
\newcommand{\renamenop}[1]{#1^\alpha}
\newcommand{\decode}[1]{\llbracket #1\rrbracket}
\renewcommand{\decode}[1]{\overline{#1}}
\newcommand{\deriv}{d}
\newcommand{\derivtwo}{e}
\newcommand{\derivthree}{f}
\newcommand{\derivfour}{g}
\newcommand{\sizehole}[2]{|#2|_{#1}}
\newcommand{\sizepr}[1]{\sizehole{\psym}{#1}} %Giulio
\newcommand{\mach}{{\sf M}}
\newcommand{\osym}{{\mathsf o}}
\newcommand{\la}[1]{\lambda #1.}
\newcommand{\rctx}{R}
\newcommand{\rctxtwo}{\rctx'}
\newcommand{\rctxp}[1]{\rctx\ctxholep{#1}}
\newcommand{\decodep}[2]{\decode{#1}\ctxholep{#2}}
\newcommand{\myproof}[1]{
\ifthenelse{\boolean{omitproofs}}{\begin{IEEEproof} Proof available but omitted for readability. \end{IEEEproof}}{#1}}
\newcommand{\gregoire}{Gr{\'{e}}goire\xspace}
\newcommand{\node}{\mathsf{n}}
\newcommand{\withproofs}[1]{\ifthenelse{\boolean{withproofs}}{#1}{}}
\newcommand{\withoutproofs}[1]{\ifthenelse{\boolean{withproofs}}{}{#1}}
\crefname{proposition}{Prop.}{Props.}
\crefname{theorem}{Thm.}{Thms.}
\crefname{lemma}{Lemma}{Lemmas}
\crefname{corollary}{Cor.}{Cors.}
\crefname{section}{Sect.}{Sects.}
\Crefname{section}{Section}{Sections}
\newcommand{\doubt}[1]{}
\newcommand{\letexp}{\mathsf{let}}
\newcommand{\letin}[3]{{\sf let}\ #1=#2\ {\sf in}\ #3}
\newcounter{numberone}
\newcounter{numberoneroman}
\newcounter{numberonealph}
\newcommand{\cbv}{CbV\xspace}
\newcommand{\tostrat}{\Rew{\symfont{str}}}
\newcommand{\tomachine}{\tomachhole\mach}
\newcommand{\mytr}[1]{\underline{#1}}
\newcommand{\auxtr}[1]{\overline{#1}}
\newcommand\Copy[2]{% #1 is a key, #2 is the text
        \marginpar{\scriptsize \ \ \hyperlink{hl-appendix-#1}{Proof p.\,{\pageref*{appendix-#1}}}}
	\immediate\write\@auxout{\unexpanded{\global\long\@namedef{mytext@#1}{#2}
  }}%
	#2%
}
\newcommand\Paste[1]{%
        \hypertarget{hl-appendix-#1}{}\label{appendix-#1}
	\renewcommand{\inappendix}[1]{}
	\ifcsname mytext@#1\endcsname
	\@nameuse{mytext@#1}%
	\else
	``??''
	\fi
	\renewcommand{\inappendix}[1]{#1}
}
\newcommand{\inappendix}[1]{#1}
\newcommand{\sizep}[2]{|#1|_{#2}}
\newcommand\Crumb\mytr
\newcommand\CrumbAux\auxtr
\newcommand{\exder}{%
  \def\exderW[##1]{\triangleright_{##1}\ }%
  \def\exderWO{\triangleright\ }%
  \@ifnextchar[\exderW\exderWO%
  }
\newcommand{\dom}[1]{\mathsf{dom}(#1)}
\newcommand{\bigo}{\mathcal{O}}
\newcommand\mydots{\hbox to .6em{.\hss.}}
\newcommand{\Lpos}{$\lambda_{\symfont{pos}}$\xspace}
\newcommand{\topos}{\Rew{\possym}}
\renewcommand{\topos}{\Rew{\osym_+}}
\newsavebox{\@brx}
\newcommand{\llangle}[1][]{\savebox{\@brx}{\(\m@th{#1\langle}\)}%
  \mathopen{\copy\@brx\kern-0.5\wd\@brx\usebox{\@brx}}}
\newcommand{\rrangle}[1][]{\savebox{\@brx}{\(\m@th{#1\rangle}\)}%
  \mathclose{\copy\@brx\kern-0.5\wd\@brx\usebox{\@brx}}}
\newcommand{\asym}{a}
\newcommand{\symfont}[1]{\mathtt{#1}}
\newcommand{\bite}{b}
\newcommand{\bitetwo}{\bite'}
\newcommand{\tor}{\Rew{\symfont{r}}}
\newcommand{\run}{r}
\newcommand{\runtwo}{\run'}
\newcommand{\runthree}{\run''}
\newcommand{\States}{\symfont{States}}
\newcommand{\compilrelsym}{\triangleleft}
\newcommand{\compilrel}[2]{#1\compilrelsym#2}
\newcommand{\prsym}{\symfont{pr}}
\newcommand{\tomachpr}{\tomachhole{\prsym}}
\newcommand{\seasym}{\symfont{sea}}
\newcommand{\tomachsea}{\tomachhole{\seasym}}
\newcommand{\orctx}{\rctx_o}
\newcommand{\orctxtwo}{\orctx'}
\newcommand{\orctxp}[1]{\orctx\ctxholep{#1}}
\newcommand{\orctxtwop}[1]{\orctxtwo\ctxholep{#1}}
\newcommand{\irctx}{\rctx_i}
\newcommand{\irctxtwo}{\irctx'}
\newcommand{\irctxtwop}[1]{\irctxtwo\ctxholep{#1}}
\newcommand{\slice}{\symfont{S}}
\newcommand{\slicetwo}{\slice'}
\newcommand{\emptylist}{\epsilon}
\newcommand{\emslice}{\epsilon}
\newcommand{\sliceentry}[2]{#1[\replace{#2}{\cdot}]}
\newcommand{\NoteProof}[1]{\ifthenelse{\boolean{withproofs}}{
	\ifthenelse{\boolean{appendix}}{
	\marginnote{Link to the original position of \cref{#1}, p. \pageref{#1}}
	}{
	\marginnote{{Proof\,p.\,{\pageref{app:#1}}}}
	}
	}{}
}
\newcommand{\applabel}[1]{$\phantomsection\label{app:#1}$}
\tikzset{
node distance=1.3cm, auto,
every node/.style={font=\scriptsize },
ocenter/.style={baseline={([yshift=-.5ex, xshift=-.5ex]current bounding box)}},  
labelBeginAbove/.style={postaction={decorate,decoration={markings,mark=at position 0 with {\node[inner sep= 0.6pt, above=1pt]{\tiny #1};}} } },
labelBeginBelow/.style={postaction={decorate,decoration={markings,mark=at position 0 with {\node[inner sep= 0.6pt, below=1pt]{\tiny #1};}}}},
labelEndAbove/.style={postaction={decorate,decoration={markings,mark=at position 1 with {\node[inner sep= 0.6pt, above=1pt]{\tiny #1};}}}},
labelEndBelow/.style={postaction={decorate,decoration={markings,mark=at position 1 with {\node[inner sep= 0.6pt, below=1pt]{\tiny #1};}}}},
labelEndRight/.style={postaction={decorate,decoration={markings,mark=at position 1 with {\node[inner sep= 0.6pt, right=1pt]{\tiny #1};}}}},
labelEndLeft/.style={postaction={decorate,decoration={markings,mark=at position 1 with {\node[inner sep= 0.6pt, left=1pt]{\tiny #1};}}}}
}
\newcommand{\NoteProofNoComma}[1]{\withproofs{\ifthenelse{\boolean{appendix}}{Originally at p. \pageref{#1}}{Proof at p.\,{\pageref{app:#1}}}}}
\newcommand{\threestate}[3]{(#1, #2, #3)}
\renewcommand{\evctx}{O}
\renewcommand{\topos}{\Rew{\symfont{pos}}}
\renewcommand{\deriv}{e}
\renewcommand{\derivtwo}{\deriv'}
\renewcommand{\derivthree}{\deriv''}
\renewcommand{\derivfour}{\deriv'''}
\begin{document}
\title{Positive Sharing and Abstract Machines}
%
%\titlerunning{Abbreviated paper title}
% If the paper title is too long for the running head, you can set
% an abbreviated paper title here
%
\author{Beniamino Accattoli\inst{1}\orcidlink{0000-0003-4944-9944} \and
Claudio Sacerdoti Coen\inst{2}\orcidlink{0000-0002-4360-6016} \and
Jui-Hsuan Wu\inst{3}\orcidlink{0000-0001-5880-5379}}

\authorrunning{Accattoli et al.}
% First names are abbreviated in the running head.
% If there are more than two authors, 'et al.' is used.
\institute{Inria \& LIX, École Polytechnique, UMR 7161, Palaiseau, France
\and
Alma Mater Studiorum - Università di Bologna, Italy
\and
CNRS, LIP, ENS de Lyon, France
}
%\author{Beniamino Accattoli\inst{1}\orcidID{0000-1111-2222-3333} \and
%Claudio Sacerdoti Coen\inst{2}\orcidID{1111-2222-3333-4444} \and
%Jui-Hsuan Wu\inst{3}\orcidID{2222--3333-4444-5555}}
%
%\authorrunning{Accattoli et al.}
% First names are abbreviated in the running head.
% If there are more than two authors, 'et al.' is used.
%
%\institute{Princeton University, Princeton NJ 08544, USA \and
%Springer Heidelberg, Tiergartenstr. 17, 69121 Heidelberg, Germany
%\email{lncs@springer.com}\\
%\url{http://www.springer.com/gp/computer-science/lncs} \and
%ABC Institute, Rupert-Karls-University Heidelberg, Heidelberg, Germany\\
%\email{\{abc,lncs\}@uni-heidelberg.de}}
%
\maketitle              % typeset the header of the contribution
\begin{abstract}
% !TEX root = main.tex
Wu's positive $\lambda$-calculus is a recent call-by-value $\lambda$-calculus with sharing coming from Miller and Wu's study of the proof-theoretical concept of focalization. Accattoli and Wu showed that it simplifies a technical aspect of the study of sharing; namely it rules out the recurrent issue of renaming chains, that often causes a quadratic time slowdown. 

In this paper, we define the natural abstract machine for the positive $\lambda$-calculus and show that it suffers from an inefficiency: the quadratic slowdown somehow reappears when analyzing the cost of the machine. We then design an optimized machine for the positive $\lambda$-calculus, which we prove efficient. The optimization is based on a new slicing technique which is dual to the standard structure of machine environments.
\keywords{$\lambda$-calculus  \and abstract machines \and complexity analyses.}
\end{abstract}

% !TEX root = main.tex
\section{Introduction}
\label{sect:intro}
The $\l$-calculus is a minimalistic abstract setting that does not come with a fixed implementation schema. This is part of its appeal as a theoretical framework. Yet, for the very same reason, many different implementation techniques have been developed along the decades. How to implement the $\l$-calculus is in fact a surprisingly rich problem with no absolute best answer. 

Recently, a proof theoretical study about focusing by Miller and Wu \cite{DBLP:conf/csl/0001W23} led to a new $\l$-calculus with sharing---Wu's positive $\l$-calculus \Lpos \cite{DBLP:conf/aplas/Wu23}---that provides a fresh perspective on, and an improvement of a recurrent efficiency issue as shown by Accattoli and Wu \cite{entics:14758}. The present paper studies a somewhat surprising fact related to the efficiency of the positive $\l$-calculus. 

Similar to the ordinary $\l$-calculus, Wu's calculus is a minimalistic abstract setting. Its sharing mechanism, however, decomposes $\beta$-reduction in micro steps and makes it closer to implementations than the ordinary $\l$-calculus. It is, in fact, almost an abstract machine. 

This work stems from the observation that the natural refinement of \Lpos as an abstract machine suffers from an inefficiency. Intuitively, the efficiency issue improved by \Lpos resurfaces at the lower level of machines. After pointing out the problem, we design an optimized abstract machine and prove it efficient, solving the issue. The adopted \emph{slicing} optimization is new and based on a dual of the standard environment data structure for abstract machines. We believe it to be an interesting new implementation schema. 

\paragraph{The Positive $\l$-Calculus.} In \cite{DBLP:conf/csl/0001W23}, Miller and Wu decorate focalized proofs for minimal intuitionistic logic  with proof terms. Minimal intuitionistic logic is the proof-theoretical counterpart of the $\l$-calculus via the Curry-Howard correspondence. Focalization is a technique that constrains the shape of proofs depending on a polarity assignment to atomic formulas. Miller and Wu show that uniformly adopting the negative polarity induces the ordinary $\l$-calculus, while the positive polarity induces an alternative syntax with (sub-term) sharing, where sharing is represented via $\letexp$-expressions or, equivalently, explicit substitutions. 

An important aspect of this new positive syntax is its \emph{compactness property}, also referred to as \emph{positive sharing}: the shape of terms is highly constrained, more than in most other calculi with $\letexp$-expressions or explicit substitutions. In particular, applications cannot be nested, arguments can only be variables, applications and abstractions are always shared, and variables cannot be shared.  Some of these constraints are also at work in Sabry and Felleisen's \emph{A-normal forms} \cite{DBLP:conf/lfp/SabryF92,DBLP:conf/pldi/FlanaganSDF93}, of which positive sharing can be seen as an even more constrained variant; see the introduction of Accattoli and Wu \cite{entics:14758} for extended discussions about similar formalisms. Intuitively, the constrained syntax somewhat forces a maximal sharing of sub-terms by ruling out redundant cases of sharing from the grammar for terms.

In \cite{DBLP:conf/aplas/Wu23}, Wu endows the positive syntax with rewriting rules. Because of the compact syntax, there is not much freedom for defining the rules: they have to be call-by-value, and they have to be micro-step, that is, the granularity of the substitution process is the one of abstract machines, that replace one variable occurrence at a time. Call-by-name or meta-level substitution on all occurrences of a variable are indeed ruled out because they do not preserve compactness. The outcome is the positive $\l$-calculus \Lpos. The difference between \Lpos and an abstract machine is only that \Lpos does not have rules searching for redexes.

\paragraph{Renaming Chains.} In \cite{entics:14758}, Accattoli and Wu show two things. Firstly, compactness rules out a recurrent issue in the study of sharing and abstract machines, namely \emph{renaming chains}. In general, when one adds to the $\l$-calculus an explicit substitution construct, noted here $\tm\esub\var\tmtwo$ (equivalent to $\letin\var\tmtwo\tm$, and sharing $\tmtwo$ for $\var$ in $\tm$), then one can have chains of shared variables of the following form: $\tm\esub{\var_1}{\var_2}\ldots \esub{\var_{n-1}}{\var_n}$. 

These renaming, or indirection chains, are a recurrent burden of sharing-based systems, as they lead to both time and space inefficiencies---typically a quadratic time slowdown---and need optimizations to avoid their creations, as done for instance by Sands et al. \cite{DBLP:conf/birthday/SandsGM02}, Wand \cite{DBLP:journals/lisp/Wand07}, Friedman et al. \cite{DBLP:journals/lisp/FriedmanGSW07}, and Sestoft \cite{DBLP:journals/jfp/Sestoft97}. In the literature on sharing and abstract machines, the issue tended to receive little attention, until Accattoli and Sacerdoti Coen focused on it \cite{DBLP:journals/iandc/AccattoliC17}. Removing renaming chains is also essential in the study of reasonable logarithmic space for the $\l$-calculus, see Accattoli et al. \cite{DBLP:conf/lics/AccattoliLV22}.

In the positive $\l$-calculus, variables cannot be shared. Therefore, renaming chains simply cannot be expressed, ruling out the issue. It is important to point out, however, that forbidding the sharing of variables requires tuning the rewriting rules by adding some meta-level renamings.

\paragraph{Useful Sharing.} Secondly, Accattoli and Wu show that the compactness and the removal of renaming chains of the positive $\l$-calculus have the effect of drastically simplifying \emph{useful sharing}, a sophisticated implementation technique introduced by Accattoli and Dal Lago in the study of reasonable time cost models \cite{DBLP:journals/corr/AccattoliL16}. Intuitively, useful sharing aims at preventing the useless unfolding of sharing, which is exactly what is achieved by compactness via the restriction of the grammar of terms. That is, positive sharing captures the essence of useful sharing.

To be precise, Accattoli and Wu show that compactness simplifies the \emph{specification} of useful sharing, but they do not provide precise cost analyses---that are an essential aspect of useful sharing---for the positive $\l$-calculus.

\paragraph{The Inefficiency.} The inception of the present work is exactly the desire to develop a cost analysis of \Lpos. Cost analyses are usually done via a study of an abstract machine for the calculus of interest. The somewhat surprising fact is that the natural abstract machine associated to the positive $\l$-calculus---first given here, and dubbed Natural POsitive Machine, or \emph{Natural POM}---is inefficient. The culprit is the mentioned meta-level renamings added to compensate for the fact that variables cannot be shared. The implementation of these renamings induces a quadratic (rather than linear) overhead in the number of $\beta$-steps. Essentially, the inefficiency of renaming chains reappears at the lower level of the Natural POM even if the chains themselves have disappeared. 

Useful sharing is meant to reduce the overhead from exponential to polynomial, so the inefficiency does not invalidate the value of \Lpos for specifying useful sharing. The literature however contains abstract machines for useful sharing having only a \emph{linear} overhead (by Accattoli and co-authors \cite{DBLP:conf/lics/AccattoliC15,DBLP:journals/scp/AccattoliG19,DBLP:conf/ppdp/AccattoliCGC19,DBLP:conf/lics/AccattoliCC21}), thus the Natural POM compares poorly, despite the fact that the positive $\l$-calculus is a better specification of useful sharing than those in the literature.

\paragraph{The Solution: Sub-Term Property and Slices.} We then design an optimized abstract machine, the \emph{Sliced POM}, that solves the issue and recovers linearity in the number of $\beta$-steps. To give a hint of the solution, we need to say a bit more about the problem. 
Complexity analyses of abstract machines are based on their crucial \emph{sub-term property}, stating that all the terms duplicated along the run of the machine are sub-terms of the initial term. The property allows one to bound the cost of each transition using the initial term, which is essential in order to express the cost of the run as a function of the size of the initial term. 

Now, the Natural POM does verify the sub-term property \emph{with respect to duplications}; the source of the inefficiency for once is not the duplication process. The source is nonetheless related to a lack of sub-term property \emph{for renamings}: the additional meta-level renamings act over a scope that might not be a sub-term of the initial term. The solution amounts to \emph{slicing} such scopes in slices that are sub-terms of the initial term, and in noting that each meta-level renaming is always confined to exactly one slice. The slicing technique is managed very easily, via an additional basic data structure, the slice stack. 

\paragraph{Slices $vs$ Environments.} A pleasant aspect is that the slicing stack can be seen as \emph{dual} to the environment data structure used to manage sharing: 
\begin{itemize}
\item an environment entry $\esub\var\tm$ stores the delayed substitution of $\tm$ for $\var$ waiting for the evaluation of active term to expose occurrences of $\var$ to replace;
\item an entry $\sliceentry\tm \var$ of the slice stack waits for the active term to become a variable $\varthree$ to be substituted for $\var$ in the slice $\tm$. 
\end{itemize}
This duality suggests that the slicing technique exposes a new natural structure. The meta-level renamings of the calculus are similar to the ones generated by $\beta$-redexes. In traditional settings with sharing, indeed, the optimizations to remove renaming chains act on the environment. We find it interesting that positive sharing \emph{disentangles} meta-level renamings from the usual substitution process, and manages them via a sort of dual mechanism.

\paragraph{On the Value of Positive Sharing.} The reader might wonder what the value of positive sharing is, given that the issue that it is supposed to solve does reappear at the level of machines and forces the design of a new solution. In our opinion, the value of positive sharing is in re-structuring the study of sharing techniques. The study by Accattoli and Wu \cite{entics:14758} suggests that positive sharing \emph{removes} the issue of renaming chains, but our work provides a refined picture. Namely, positive sharing  enables a neater theory of sharing, disentangling renaming chains from the treatment of explicit substitutions, and encapsulating their inefficiency at the lower level of implementations choices. Additionally, it brings to the fore the concept of slice stack, which we believe is an interesting addition to the theory of implementations of the $\l$-calculus. 

\paragraph{OCaml Implementation.} For lack of space we overview in \withproofs{\refapp{app-implementation}}\withoutproofs{Appendix A of the technical report \cite{techreport}} an implementation in OCaml of the
Sliced POM, to be found at~\url{https://github.com/sacerdot/PositiveAbstractMachine}.
The implementation is meant to provide further evidence on the cost of the atomic operations of the machine.
Moreover, it allows the interested reader to enter ordinary $\lambda$-terms and see how they are transformed into positive terms (via the transformation studied by Accattoli and Wu in~\cite{entics:14758}) and then run by the Sliced POM.

The implementation is a prototype: it does not attempt to further optimize space usage or to recover garbage memory, nor does it optimize the code that is unrelated to running the machine (\eg the pretty-printing of machine states).

\withproofs{\paragraph{Proofs.} Omitted proofs are in the Appendix.}
\withoutproofs{\paragraph{Proofs.} Most proofs are omitted. They can be found in the appendix of the technical report on arXiv~\cite{techreport}.}

% !TEX root = main.tex
\section{The Positive $\l$-Calculus}
\label{sect:positive}

% !TEX root = ../main.tex
\begin{figure}[t!]
\centering
\fbox{
		\begin{tabular}{c}
				$
				\begin{array}{rrll\colspace\colspace\colspace rrll }
		\textsc{Bites } & \bite, \bitetwo & \grameq& \vartwo\varthree \mid \la\vartwo\tmtwo \mid (\la\vartwo\tmtwo)\varthree
		&
		\textsc{Evaluation Ctxs } & \evctx & \grameq &\ctxhole \mid \evctx\esub\var\bite
				\\[3pt]
		\textsc{Terms } & \tm, \tmtwo, \tmthree & \grameq &\var \mid \tm \esub\var\bite
				\end{array}$
			\\[10pt]
			\hline
			%%%%%%%%%%%%%%%%%
			%%%%%%% begin weak evaluation
				$\begin{array}{c}
					\textsc{Root reduction rules}
					\\[4pt]
					\begin{array}{r\colspace rll}
  				    \textsc{Multiplicative} & \tm\esub\var{(\la\vartwo{\evctxp{\varthree}})\varfour}
				    & \rtom &
\evctxp{\tm\isub{\var}{\varthree}}\isub{\vartwo}{\varfour}
					\\[3pt]
				    \textsc{Exponential} & \evctxp{\tm\esub\var{\vartwo\varthree}}
\esub\vartwo{\la\varfour\tmtwo}
				    & \rtoe &
\evctxp{\tm\esub{\var}{(\la\varfour\tmtwo)\varthree}}\esub\vartwo{\la\varfour\tmtwo}
\\
&&&\text{ with }\vartwo\notin\dom\evctx
%					\\[3pt]
%					\tm\esub\var{\la\vartwo\tmtwo} &  \rtogcp & \tm \ \ \ \  \mbox{if }\var\notin\fv\tm
					\end{array}
					\end{array}$
					\\[17pt]
					\hline
					$\begin{array}{c\colspace |\colspace c}

					\begin{array}{rl}
					\textsc{Ctx closure}
					&
					\RightLabel{$\asym{\in}\{\msym,\esym\}$}
					\AxiomC{$\tm \rootRew{\asym} \tm'$}
					\UnaryInfC{$\evctxp\tm \Rew{\asym} \evctxp{\tm'}$}
					\DisplayProof
					\end{array}
					&
					\textsc{Notation} \ \ 					\topos    \defeq  \tom \cup \toe					
				\end{array}$
				\end{tabular}
			}
				\caption{The positive $\l$-calculus \Lpos.}
				\label{fig:explicit-positive}	
		\end{figure}
In this section, we present Wu's positive $\l$-calculus \cite{DBLP:conf/aplas/Wu23}. Precisely, we adopt the \emph{explicit open} variant from Accattoli and Wu \cite{entics:14758}. For simplicity, we simply refer to it as the positive $\l$-calculus, and note it \Lpos. We  depart slightly from the presentation in \cite{entics:14758}, omitting the garbage collection rule because it can always be postponed, as shown in \cite{entics:14758}, and changing some notations. The definition is in \reffig{explicit-positive}.

\paragraph{Terms.} The positive $\l$-calculus uses $\letexp$-expressions, similarly to Moggi's \cbv calculus \cite{Moggi88tech,DBLP:conf/lics/Moggi89}. We do however write a $\letexp$-expression $\letin\var\tmtwo\tm$ as a more compact  \emph{explicit substitution} $\tm\esub{\var}{\tmtwo}$ (\ES for short), which binds $\var$ in $\tm$. Moreover, for the moment our $\letexp$/ES does not fix an order of evaluation between $\tm$ and $\tmtwo$, in contrast to many papers in the literature (\eg Levy et al. \cite{DBLP:journals/iandc/LevyPT03}) where $\tmtwo$ is evaluated first. The evaluation order shall be fixed in the next section.%Intuitively, $\tm\esub{\var}{\tmtwo}$ is a construct for \emph{sub-term sharing}, as for instance is evident when comparing $\tm\tm$ and $(\var\var)\esub\var\tm$, where $\tm$ is shared in the latter.

While we borrow the terminology \emph{explicit substitution} from the seminal work of Abadi et al. \cite{DBLP:journals/jfp/AbadiCCL91}, the way we employ the construct is deeply different, since the theory of ESs has progressed considerably since that first work. In particular, we do use variable names instead of de Bruijn indices, and our ESs do not move through the structure of the term, instead they act \emph{at a distance} (explained below). The positive $\l$-calculus is rather reminiscent of Sabry and Felleisen's \emph{A-normal forms} \cite{DBLP:conf/lfp/SabryF92,DBLP:conf/pldi/FlanaganSDF93}, or of the calculi for call-by-need by Lunchbury \cite{DBLP:conf/popl/Launchbury93} and Sestoft \cite{DBLP:journals/jfp/Sestoft97}, of which it can be thought as a more constrained variant. % (see Accattoli and Wu \cite{entics:14758} for extended discussions about similar formalisms)

In fact, ESs are here used in a slightly unusual way even with respect to more recent work on ESs. In \Lpos, as in the $\l$-calculus, there are only three constructors, variables, applications, and abstractions. There are however, various differences, namely:
\begin{itemize}
\item Applications have either shape $\vartwo\varthree$ or $(\la\vartwo\tm)\varthree$, that is, arguments can only be variables and the left sub-term cannot be an application;
\item Applications and abstractions are dubbed \emph{bites} (following the terminology of Accattoli et al. \cite{DBLP:conf/ppdp/AccattoliCGC19}) and are always shared, that is, standalone applications and abstractions are not allowed by the grammar. They can only be introduced by the ESs constructs $\esub{\var}{\vartwo\varthree}$, $\esub{\var}{(\la\vartwo\tm)\varthree}$, and $\esub{\var}{\la\vartwo\tmtwo}$;
\item Positive sharing is peculiar as positive terms are \emph{not} shared in general, that is, $\tm\esub\var\tmtwo$ is not a positive term. There is no construct for sharing variables or applications/abstractions with top-level sharing, \ie $\tm\esub\var\vartwo$ or $\tm\esub\var{\vartwo\varthree\esub\vartwo{\la\varfour\tmtwo}}$ are not terms of \Lpos. In particular, the absence of $\tm\esub\var\vartwo$ is what forbids renaming chains.
\end{itemize}
The set of free variables of a term $\tm$ is denoted by $\fv{\tm}$ and it is defined as expected. Terms are identified up to $\alpha$-renaming. We use $\tm \isub{\var}{\vartwo}$ for the capture-avoiding substitution of $\vartwo$ for each free occurrence of $\var$ in $\tm$; in this paper we never need the more general operation $\tm \isub{\var}{\tmtwo}$ substituting terms. The meta-level renamings mentioned in the introduction, to be used to compensate for the absence of $\tm\esub\var\vartwo$, shall be instances of $\tm \isub{\var}{\vartwo}$.

\paragraph{Open Setting.} Evaluation in \Lpos shall be open in the sense promoted by Accattoli and Guerrieri \cite{DBLP:conf/aplas/AccattoliG16}, that is, it does not go under abstraction (also referred to as \emph{weak}) and terms \emph{can} be open (but do not have to). As discussed at length by Accattoli and Guerrieri, this is a more general framework than the standard for functional programming languages of weak evaluation and \emph{closed} terms. The increased generality enables the developed theory to scale up to evaluation under abstraction, which is needed to model proof assistants, by iterating evaluation under abstraction---this is done for instance by \gregoire and Leroy for Coq \cite{DBLP:conf/icfp/GregoireL02}---because the body of abstractions cannot be assumed to be closed. See \cite{DBLP:conf/aplas/AccattoliG16} for more discussions.

\paragraph{Open Contexts.}   Contexts are terms with exactly one occurrence of the \emph{hole} $\ctxhole$, an additional constant, standing for a removed sub-term. We shall use various notions of contexts. The most general ones in this paper are \emph{open contexts} $\evctx$, that are simply lists of ESs.
The main operation about contexts is \emph{plugging} $\evctxp{\tm}$ where the hole $\ctxhole$ in context 
$\evctx$ is replaced by $\tm$. Plugging, as usual with contexts, can
capture variables---for instance $(\ctxhole\esub\var\bite)\ctxholep\var = \var\esub\var\tmtwo$. 

The domain $\dom\evctx$ of a context is the set of variables possibly captured by $\evctx$ (i.e., on which $\evctx$ has an ES scoping over $\ctxhole$); example: setting $\evctx \defeq \ctxhole\esub\var{\la\vartwo\vartwo\esub{\vartwo'}{\vartwo''\vartwo''}}\esub\varthree{\varthree'\varthree''}$ one obtains $\dom{\evctx} = \set{\var,\varthree}$. When $\var\in\dom\evctx$, we also use the notation $\evctx(\var)$ to denote the bite associated to $\var$ in $\evctx$; in the example, $\evctx(\varthree) = \varthree'\varthree''$.

 As it is immediately seen from the grammar of positive terms, every positive term $\tm$ can be written uniquely as $\evctxp\var$ for some $\var$ and $\evctx$, with $\evctx$ possibly capturing $\var$. If $\tm=\evctxp\var$ then $\var$ is referred to as the \emph{head variable} of $\tm$.

\paragraph{Rewriting Rules.} There are two rewriting rules, following the \emph{at a distance} style promoted by Accattoli and Kesner \cite{DBLP:conf/csl/AccattoliK10}, which involves contexts in the definition of the rules, even before the contextual closure. The rules names come from the connection with linear logic proof nets, which is omitted here.%, see Accattoli \cite{DBLP:journals/tcs/Accattoli15,DBLP:conf/ictac/Accattoli18}.

The multiplicative rule $\tom$ reduces a shared $\beta$-redex. The rule is forced to decompose the body of the abstraction as $\evctxp\varthree$ in order to write the reduct. The point is that the simpler rule $\tm\esub\var{(\la\vartwo\tmtwo)\varfour} \rtom \tm\esub\var{\tmtwo\esub\vartwo\varfour}$ does not respect the grammars of the positive $\l$-calculus, since ESs such as $\esub\var\tmtwo$ and $\esub\vartwo\varfour$, as well as their nesting, are forbidden. Let us show an example of multiplicative step stressing the action of renamings:
\begin{center}
$\begin{array}{l\colspace c\colspace l\colspace \colspace l}
\varthree\esub\var{\vartwo\varthree}\esub\varthree{(\la\varfour\varthree'\esub{\var'}{\varfour\varthree'})\vartwo'} 
& \tom&
\varthree\esub\var{\vartwo\varthree}\isub\varthree{\varthree'}\esub{\var'}{\varfour\varthree'}\isub\varfour{\vartwo'}
\\ & =&
\varthree'\esub\var{\vartwo\varthree'}\esub{\var'}{\vartwo'\varthree'}

\end{array}$
\end{center}

The exponential rule $\toe$ simply replaces an applied variable with the associated abstraction. Note that arguments, that are variables, are never replaced, because their replacement would---once more---step out of the grammar of the positive $\l$-calculus.

Both rules are closed by open contexts and together form the rewriting relation $\topos$ of the positive $\l$-calculus.

\paragraph{Translating $\l$-Terms.} In \cite{entics:14758}, Accattoli and Wu show how to translate $\l$-terms to positive terms in a way that induces a simulation of call-by-value evaluation. We refer the interested reader to their work, because in this paper we only deal with positive terms.

\paragraph{Diamond.} The defined calculus is non-deterministic. Consider for instance $\tm \defeq \varthree\esub\var{\vartwo\vartwo}\esub\varthree{(\la\varfour\varfour)\vartwo'}\esub\vartwo{\la{\var'}\tmtwo}$. One has for instance the following diagram:
\begin{center}
\begin{tikzpicture}[ocenter]
		\node at (0,0)[align = center](source){\normalsize $\tm$};
		\node at (source.center)[below = 20pt](source-down){\normalsize $\varthree\esub\var{(\la{\var'}\tmtwo)\vartwo}\esub\varthree{(\la\varfour\varfour)\vartwo'}\esub\vartwo{\la{\var'}\tmtwo}$};
		\node at (source.center)[right = 130pt](source-right){\normalsize $\vartwo'\esub\var{\vartwo\vartwo}\esub\vartwo{\la{\var'}\tmtwo}$};
		\node at (source-right|-source-down)(target){\normalsize $\vartwo'\esub\var{(\la{\var'}\tmtwo)\vartwo}\esub\vartwo{\la{\var'}\tmtwo}$};
				
		\draw[->](source) to node[above] {\scriptsize $\msym$} (source-right);
		\draw[->](source) to node[left] {\scriptsize $\esym$} (source-down);
		
		\draw[->, dotted](source-down) to node[above] {\scriptsize $\msym$} (target);
		\draw[->, dotted](source-right) to node[right] {\scriptsize $\esym$} (target);

	\end{tikzpicture}
\end{center}
%\begin{center}
%$\begin{array}{l\colspace l\colspace l\colspace \colspace l}
%& \tm &\tom& 
%\vartwo'\esub\var{\vartwo\vartwo}\esub\vartwo{\la{\var'}\tmtwo}
%\\[3pt]
%\text{and} & \tm &\toe& \varthree\esub\var{(\la{\var'}\tmtwo)\vartwo}\esub\varthree{(\la\varfour\varfour)\vartwo'}\esub\vartwo{\la{\var'}\tmtwo}
%\end{array}$
%\end{center}
The calculus however is confluent, and even more than confluent, it has the diamond property. According to Dal Lago and Martini \cite{DBLP:journals/tcs/LagoM08}, a relation $\to$ is \emph{diamond} if $\tmtwo_1 \lto \tm \to \tmtwo_2$  imply $\tmtwo_1 = \tmtwo_2$ or $\tmtwo_1 \to \tmthree \, \lto \tmtwo_2$ for some $\tmthree$. The diamond property expresses a relaxed form of determinism, since it states that different choices cannot change the result \emph{nor the length of evaluation sequences} (note that the diagram closes in either zero or one steps on both sides).
\begin{theorem}[Positive diamond, \cite{entics:14758}]
Relation $\topos$ is diamond.\label{thm:lppos-diamond}
\end{theorem}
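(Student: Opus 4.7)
The plan is to verify the diamond property by a systematic case analysis on the two competing redex occurrences in the starting term $\tm$. Since evaluation contexts $\evctx$ are simply lists of explicit substitutions, each redex of $\tm$ is located by specifying which ES(s) anchor it: a multiplicative redex is anchored at the single ES $\esub\var{(\la\vartwo\tmtwo)\varfour}$ it consumes, while an exponential redex is anchored at the \emph{pair} consisting of the applicative ES $\esub\var{\vartwo\varthree}$ and the abstraction-binding ES $\esub\vartwo{\la\varfour\tmtwo}$ that must appear further out. The rigidity of the positive grammar (bites only sit under ESs, arguments are variables, ESs are not nested inside other ESs) gives each redex a very constrained local shape, which sharply limits the critical pairs that can arise compared with $\beta$-reduction on unrestricted $\lambda$-terms.

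Within this framework I would enumerate the overlaps. Two multiplicative redexes necessarily lie at distinct ES positions of the outer list; each step consumes only its own anchor ES and, modulo $\alpha$-renaming, leaves the other's anchor intact, so they commute on the nose. Two exponential redexes may share their abstraction anchor but always have distinct applicative anchors; since $\toe$ preserves every ES and only refines the shape of one applicative bite, both redexes remain valid after either step and the common reduct is reached by firing the other one second. The interesting case is a multiplicative against an exponential redex: if the anchors are fully disjoint they again commute trivially; the only genuinely critical subcases are when the mul-anchor coincides with the abstraction anchor of the exp-redex (so the mul-step destroys what the exp-step was duplicating) or when one lives strictly inside the scope of the other, and in these subcases one checks that after one step the residual redex of the other is still a legal redex leading to the same reduct.

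The main obstacle is the bookkeeping forced by the multiplicative rule: firing it lifts the inner open context $\evctx$ of the shared $\beta$-redex to the outside and applies two meta-level renamings $\isub\var\varthree$ and $\isub\vartwo\varfour$. When the competing redex lives inside the body $\tmtwo$ of the consumed abstraction, inside $\evctx$, or inside the outer term $\tm$, one must track how its anchor ESs migrate under these renamings and under the contextual lift, and verify that the renamed residual is still a positive term of the required shape. Once all root-level peaks are closed in at most one step on either side, the diamond property propagates to $\topos = \tom \cup \toe$ by a routine induction on the containing evaluation context, using that $\evctx$ is just a list of ESs and so never itself contains a redex position.
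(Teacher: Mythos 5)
The paper itself gives no proof of this theorem: it is imported from Accattoli and Wu \cite{entics:14758}, so there is no in-paper argument to compare yours against. Taken on its own terms, your critical-pair strategy is the right one and would go through, but two of the case distinctions you flag as ``genuinely critical'' are in fact vacuous, and seeing why removes most of the bookkeeping you are worried about. First, since evaluation contexts are just lists of ESs and never enter a bite, every redex position lies on the top-level spine of ESs; no redex can live inside the body $\tmtwo$ of the consumed abstraction, so the residual analysis never has to follow an anchor through the contextual lift of $\evctx$ --- the spliced-in $\evctx$ merely lengthens the list of ESs separating the applicative anchor of an $\esym$-redex from its abstraction anchor, which the rule $\rtoe$ tolerates by construction. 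Second, a multiplicative anchor $\esub\var{(\la\vartwo\tmtwo)\varfour}$ and the abstraction anchor $\esub\vartwo{\la\varfour\tmtwo}$ of an exponential redex carry bites of syntactically different shapes (applied abstraction versus plain abstraction), so they can never be the same ES and the ``mul-step destroys what the exp-step was duplicating'' subcase cannot arise. What remains is exactly what you describe for the easy cases: distinct spine anchors, each step preserving the other's anchor up to the renamings $\isub\var\varthree$ and $\isub\vartwo\varfour$, neither of which can hit the function variable of a surviving $\esym$-redex (the former replaces a variable bound to an applied abstraction, the latter a $\l$-bound variable occurring only in the abstraction body). The one point you should make explicit is that these renamings, and the copy performed by $\toe$, commute when performed in either order --- that is where the actual equality of the two reducts is established. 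Your closing ``induction on the containing evaluation context'' is redundant: the position of a step already \emph{is} its evaluation context, so the case analysis on the pair of positions is the whole proof.
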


\paragraph{Sub-Term Property.} When the substitution process is decomposed in micro steps, usually it is possible to bound the cost of each duplication along an evaluation sequence using the size of the initial term of the sequence---this is the sub-term property. It is crucial in order to analyze the cost of evaluation sequences as a function of the size of the initial term and the number of steps, since the main danger of excessive cost usually comes from duplications. The property does not hold, for instance, for the ordinary $\l$-calculus (see Accattoli \cite[Section 3]{DBLP:journals/lmcs/Accattoli23}) that relies on meta-level (rather than micro-step) substitution. The positive $\l$-calculus has the sub-term property, that is expressed for values, because they are what is duplicated by the exponential rule.

\begin{lemma}[Sub-term property]
Let $\tm \topos^* \tmtwo$ be a reduction sequence. Then $\size{\la\var\tmthree}\leq\size\tm$ for every bite $\la\var\tmthree$ duplicated by a $\esym$-step of the sequence.\label{l:sub-term-calculus}
\end{lemma}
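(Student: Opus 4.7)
The plan is to proceed by induction on the length $n$ of the reduction sequence $\tm \topos^* \tmtwo$, paired with a preservation lemma showing that every bite occurring in any intermediate term is (up to variable renaming) a bite already occurring in $\tm$. The key high-level observation is that in \Lpos, positive sharing forces arguments to be variables and forbids top-level variable sharing, so the only substitutions performed by the rewriting rules are variable-for-variable substitutions $\isub{\var}{\varthree}$. Such substitutions trivially preserve the size of every subterm they act on, so no step can ever enlarge an existing bite.

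Concretely, I would state and prove the following single-step preservation claim: \emph{if $\tmfirst \topos \tmsec$ and $\la\var\tmthree$ is a bite occurring in $\tmsec$, then there is a bite $\la{\var'}\tmthreep$ occurring in $\tmfirst$ with $\size{\la\var\tmthree} = \size{\la{\var'}\tmthreep}$.} The proof is by case analysis on the rule. For an $\esym$-step $\evctxp{\tmfour\esub{\var}{\vartwo\varthree}}\esub{\vartwo}{\la\varfour\tmfive} \toe \evctxp{\tmfour\esub{\var}{(\la\varfour\tmfive)\varthree}}\esub{\vartwo}{\la\varfour\tmfive}$, every bite in the reduct either already occurs in the source or is the new copy of $\la\varfour\tmfive$, which is literally identical to the bite in the ES $\esub{\vartwo}{\la\varfour\tmfive}$ of the source. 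For an $\msym$-step $\tmfour\esub{\var}{(\la\vartwo{\evctxp{\varthree}})\varfour} \tom \evctxp{\tmfour\isub{\var}{\varthree}}\isub{\vartwo}{\varfour}$, the reduct is obtained from the source by discarding the outer bite $\la\vartwo{\evctxp{\varthree}}$, reorganizing the list of ESs $\evctx$, and applying the variable-for-variable renamings $\isub{\var}{\varthree}$ and $\isub{\vartwo}{\varfour}$; each bite in the reduct thus corresponds to a bite in the source with the same size.

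Combining the preservation claim with the inductive hypothesis shows that every bite occurring in any $\tmtwo$ with $\tm \topos^* \tmtwo$ has the same size as some bite occurring in $\tm$, and hence, being a sub-term of $\tm$, has size at most $\size\tm$. The duplicated bite of an $\esym$-step $\tmthreefirst \toe \tmthreesec$ in the sequence is by definition a bite of shape $\la\var\tmthree$ occurring in $\tmthreefirst$, and therefore satisfies the desired bound.

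The only mildly delicate point is the multiplicative case, where one has to check that the reshuffling of $\evctx$ and the double renaming do not inadvertently create a bite that is not traceable to a bite of the source. This reduces to the routine fact that variable-for-variable substitution commutes with the formation of abstractions and applications and preserves sizes pointwise. No issue of chain-collapsing or nested substitutions arises, precisely because compactness rules out renaming chains in the syntax; this is what makes the sub-term property smooth to establish for \Lpos.
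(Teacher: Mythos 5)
Your proof is correct and follows essentially the same route as the paper's: an induction on the length of the sequence with a case analysis of the last step, observing that $\esym$-steps only copy abstractions already present in the source and that the variable-for-variable renamings performed by $\msym$-steps cannot change the size of any abstraction. Your explicit single-step preservation claim is just a more careful write-up of the informal observations the paper itself gives.
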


\begin{proof}
Formally, the proof is a straightforward induction on the length of the reduction sequence, by looking at the last step and using the \ih The following informal observations however are probably enough. Evaluation  duplicates variables (in $\tom$) and abstractions (in $\toe$). The substitution of variables cannot change the size of abstractions. For abstractions, note that replaced variables are out of abstractions, and open contexts never enter abstractions, so that all abstractions of the sequence can be traced back to $\tm$ (up to $\alpha$-renaming).\qed
\end{proof}
\section{The Right Strategy} 
\label{sect:right-strategy}
Usually, abstract machines are deterministic and implement a deterministic evaluation strategy. Therefore, in this section, we define a deterministic strategy for the positive $\l$-calculus and prove its basic properties. 

We adopt the right(-to-left) strategy $\tor$ that picks redexes from right to left. It is a standard approach but it turns out that defining it in \Lpos is a bit tricky, because of $\esym$-steps, where two ESs interact at a distance.

\paragraph{Redex Positions.} For calculi at a distance, the notion of position of a redex---which is mandatory to determine the rightmost redex---might not be as expected. For us, a position in a term is simply an open context.

\begin{definition}[Redex positions]
When taking into account the contextual closure, $\msym$-steps and $\esym$-steps have the following shapes: 
\begin{center}
$\begin{array}{l\colspace l\colspace r\colspace l\colspace lll}
\tm &=& \evctxtwop{\tmthree\esub\var{(\la\vartwo{\evctxp{\varthree}})\varfour}}
				    & \tom &
\evctxtwop{\evctxp{\tmthree\isub{\var}{\varthree}}\isub{\vartwo}{\varfour}} & = & \tmtwo
\\[3pt]
\tm &=&\evctxtwop{ \evctxp{\tmthree\esub\var{\vartwo\varthree}}
\esub\vartwo{\la\varfour\tmfour}}
				    & \toe &
\evctxtwop{\evctxp{\tmthree\esub{\var}{(\la\varfour\tmfour)\varthree}}\esub\vartwo{\la\varfour\tmfour}} & = & \tmtwo
\end{array}$
\end{center}
The position of the $\msym$-step is simply given by the surrounding context $\evctxtwo$. The position of the $\esym$-step is the context $\evctxtwop{ \evctx\esub\vartwo{\la\varfour\tmfour}}$.
\end{definition}
The rationale behind the position of $\esym$-steps is the idea that they are triggered when one finds the variable to substitute, and not when one finds the abstraction---this is indeed how abstract machines work. This approach is  standard in the literature about ESs at a distance, see \eg Accattoli et al. \cite{DBLP:conf/popl/AccattoliBKL14}.  

Example: in $\var\esub\var{\vartwo\varthree}\esub{\var'}{\vartwo'\varthree}\esub{\vartwo'}{\la{\varfour'}\tmthree}\esub\vartwo{\la\varfour\tmtwo}$ there are two redexes (on $\vartwo$ and $\vartwo'$) and the rightmost one is on $\vartwo'$, despite the ES on $\vartwo'$ occurring to the left of the one on $\vartwo$.

\paragraph{Right Contexts.} We specify the strategy via the notion of right contexts, itself specified via the notion of applied free variable. In fact, there are two dual ways of defining right contexts. We present them both and prove their equivalence.

\begin{definition}[Applied free variables, right contexts]
The set of applied (and out of abstraction bodies) free variables $\afv\evctx$  of an open context $\evctx$ is defined as:\label{def:right-ctx}
\begin{center}
$\begin{array}{rll\colspace |\colspace rll}
\multicolumn{6}{c}{\textsc{Set of applied free variables}}
\\
\afv\ctxhole & \defeq &\emptyset
&
\afv{\evctx\esub\var{\vartwo\varthree}} & \defeq &(\afv\evctx\setminus\set{\var})\cup \set\vartwo
\\
\afv{\evctx\esub\var{(\la\vartwo\tm)\varthree}} & \defeq &\afv\evctx\setminus\set{\var}
&
\afv{\evctx\esub\var{\la\vartwo\tm}} & \defeq &\afv\evctx\setminus\set{\var}

\end{array}$
\end{center}
The two definitions of right contexts (we use on purpose the same meta-variable, since they shall be proved equivalent right next) are given by:
\begin{center}
$\begin{array}{lllll}
\multicolumn{3}{c}{\textsc{Outside-in right contexts}}
\\
\rctx & \grameq &\ctxhole \mid \rctx\esub\var{\vartwo\varthree}  \mid \rctx\esub\var{\la\vartwo\tmtwo} \text{ if }\var\notin\afv\rctx
\\[4pt]
\multicolumn{3}{c}{\textsc{Inside-out right contexts}}
\\
 \rctx & \grameq &\ctxhole \mid \rctxp{\ctxhole\esub\var{\vartwo\varthree}} \text{ if }\rctx(\vartwo)\neq \la\varfour\tm\ \mid \rctxp{\ctxhole\esub\var{\la\vartwo\tmtwo}}
\end{array}$
\end{center}
\end{definition}

\begin{toappendix}
\begin{lemma}
\NoteProof{l:right-ctxs-coincide}
Outside-in and inside-out right contexts coincide. \label{l:right-ctxs-coincide}
\end{lemma}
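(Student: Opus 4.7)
The plan is to introduce an auxiliary characterization and show that both the outside-in and inside-out formulations coincide with it. Say an open context $\rctx = \ctxhole\esub{\var_1}{\bite_1}\cdots\esub{\var_n}{\bite_n}$ (with ESs listed from innermost to outermost) is \emph{redex-free} if (i) no $\bite_i$ has the form $(\la\vartwo\tmtwo)\varthree$, and (ii) there are no indices $i<j$ with $\bite_i = \vartwop\varthreep$, $\bite_j = \la\varfour\tmtwo$, and $\var_j = \vartwop$. These two clauses are exactly the absence of an internal $\msym$-redex and of an internal $\esym$-redex, respectively.

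The first step is an auxiliary lemma characterizing applied free variables: by induction on $\rctx$, one shows that $\var\in\afv\rctx$ iff $\rctx$ contains some $\bite_i = \var\varthreep$ for which no larger index $j > i$ satisfies $\var_j = \var$. This simply unwinds the defining clauses of $\afv{\cdot}$.

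I would then prove the equality \emph{outside-in $=$ redex-free} by induction on the outside-in construction. In the base case $\rctx = \ctxhole$ everything is vacuous. In the step $\rctx\esub\var{\vartwo\varthree}$, no new redex involving the outermost ES can arise, because it is neither an abstraction nor a stand-alone $\msym$-redex. In the step $\rctx\esub\var{\la\vartwo\tmtwo}$, the only possible new redex would pair this outer abstraction with some inner $\bite_i = \var\varthreep$; by the auxiliary lemma, such an $i$ exists iff $\var\in\afv\rctx$, which is ruled out exactly by the outside-in side condition. The converse direction is symmetric: given a redex-free context, peel off the outermost ES and verify that the side condition holds, by the same correspondence.

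Symmetrically, one proves \emph{inside-out $=$ redex-free} by induction on the inside-out construction, this time peeling off the \emph{innermost} ES. The step for $\rctxp{\ctxhole\esub\var{\vartwo\varthree}}$ is direct: the condition $\rctx(\vartwo)\neq\la\varfour\tm$ says precisely that no outer ES in $\rctx$ pairs with the new innermost $\vartwo\varthree$ to form an $\esym$-redex, which is exactly the content of clause (ii) for the new index $i=1$. The abstraction step has no side condition because an abstraction placed at the innermost position cannot form an $\esym$-redex with anything strictly outside. The main bookkeeping obstacle throughout is keeping track of capture along the list of ESs (so the auxiliary lemma must be phrased in terms of variables not bound by a \emph{later} ES), which is a standard $\alpha$-renaming argument and the only technical point of the proof.
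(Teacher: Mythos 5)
Your proof is correct in substance but takes a genuinely different route from the paper's. The paper proves no intermediate characterization: it establishes two ``constructor commutation'' lemmas --- that inside-out right contexts are closed under the outside-in productions (adding an \emph{outermost} ES, with the $\var\notin\afv\rctx$ side condition for abstractions), and dually that outside-in right contexts are closed under the inside-out productions (adding an \emph{innermost} ES, with the $\rctx(\vartwo)\neq\la\varfour\tm$ side condition for applications) --- each by induction on the other grammar, and then derives the coincidence by two direct inductions. You instead introduce a third, declarative notion (``redex-free'': no internal $\msym$- or $\esym$-redex among the ES entries) together with a list-based characterization of $\afv{\cdot}$, and show each grammar equal to it. Your route makes explicit the semantic content that the paper leaves implicit --- a right context is exactly a context containing no internal $\topos$-redex --- and connects naturally to the redex-position discussion; the paper's route avoids defining any global notion and keeps everything inside the two inductive grammars, at the cost of two sub-inductions that re-track the side conditions. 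One point to tighten: your clause (ii) of redex-freeness is stated without excluding an intermediate ES rebinding $\vartwop$ between indices $i$ and $j$; as written it would wrongly reject a context where the head of an inner application is shadowed by a non-abstraction ES sitting between the application and an outer abstraction ES on the same name. You flag the capture issue for the auxiliary lemma on $\afv{\cdot}$, but the same ``no $k$ with $i<k<j$ and $\var_k=\vartwop$'' proviso (or a blanket assumption of pairwise distinct bound names, harmless up to $\alpha$) must appear in the definition of redex-freeness itself; with that fix the argument goes through.
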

\end{toappendix}
Because of the lemma, we only speak of right contexts, and adopt the most convenient definition in each case.

%\paragraph{Defining the Right Strategy.} Right $\msym$ steps are defined as expected, simply closing the root rule $\rtom$ by right contexts $\rctx$. For right $\esym$ steps, however, a different, global approach is needed. The point is that the closure by right contexts does not preserve the rightmost $\esym$-step (while it does preserve the rightmost $\msym$-step) because $\esym$-step are determined by \emph{two} ESs interacting at a distance. Consider:
%\begin{center}
%$\begin{array}{l\colspace l\colspace l}
%\tm\esub\var{\vartwo\varthree}\esub{\var'}{\vartwo'\varthree}\esub\vartwo{\la\varfour\tmtwo} 
%& \toe &
%\tm\esub\var{(\la\varfour\tmtwo)\varthree}\esub{\var'}{\vartwo'\varthree}\esub\vartwo{\la\varfour\tmtwo}
%\end{array}$
%\end{center}
%Which is the rightmost step. Now, consider the plugging of that term into the right context $\ctxhole\esub{\vartwo'}{\la{\varfour'}\tmthree}$ and the lifting of the previous step:
%\begin{center}
%\small
%$\begin{array}{l\colspace l\colspace l}
%\tm\esub\var{\vartwo\varthree}\esub{\var'}{\vartwo'\varthree}\esub\vartwo{\la\varfour\tmtwo}\esub{\vartwo'}{\la{\varfour'}\tmthree}
%& \toe &
%\tm\esub\var{(\la\varfour\tmtwo)\varthree}\esub{\var'}{\vartwo'\varthree}\esub\vartwo{\la\varfour\tmtwo}\esub{\vartwo'}{\la{\varfour'}\tmthree}
%\end{array}$
%\end{center}
%The lifted step is not the rightmost one, because now there is a $\esym$-step on $\vartwo'$.

\paragraph{Right Strategy.} We now have all the ingredients to define the right strategy and prove its basic properties.

\begin{definition}[Right strategy]
A $\topos$ step is \emph{right} when its position is a right context. The right strategy $\tor$ reduces at each step a right redex, if any. We write $\tm\torm\tmtwo$ (resp. $\tm\tore\tmtwo$) for a right $\msym$-step (resp. $\esym$-step).
\end{definition}

%The following lemma ensures that the right strategy is deterministic and never stops too soon.

\begin{toappendix}
\begin{lemma}[Basic properties of the right strategy]
\label{l:right-strat-properties}\NoteProof{l:right-strat-properties}\hfill
\begin{enumerate}
\item \emph{Determinism}: if $\tm \tor \tmfirst$ and $\tm \tor \tmsec$ then $\tmfirst=\tmsec$;
\item \emph{No premature stops}: if $\tm\topos \tmtwo$ then $\tm \tor \tmthree$ for some $\tmthree$.
\end{enumerate}
\end{lemma}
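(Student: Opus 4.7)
The plan is to use the inside-out characterization of right contexts from \cref{def:right-ctx} (equivalent to the outside-in one by \cref{l:right-ctxs-coincide}), viewing it as a deterministic right-to-left walk through the ES list of $\tm$. Starting from $\rctx = \ctxhole$, at each step the walk tries to extend $\rctx$ by plugging in $\ctxhole\esub{\var}{\bite}$ for the next ES to the left of the current hole; this succeeds unless either (a) $\bite$ has the $\msym$-redex shape $(\la\vartwo\tmthree)\varfour$, for which the grammar provides no rule, or (b) $\bite = \vartwo\varthree$ and $\rctx$ already binds $\vartwo$ to an abstraction, so the inside-out condition $\rctx(\vartwo)\neq\la\varfour\tmfour$ fails. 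When the walk stops, the current $\rctx$ is a right context and the ES it cannot extend past carries a redex at position $\rctx$.

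For \emph{determinism}, the key is that the walk is deterministic by construction: at each ES the grammar provides at most one applicable rule, so no branching occurs. I would show by induction on the number of ESs in $\tm$ that every right context of $\tm$ whose hole carries a redex must coincide with the walk's output. A strictly smaller right context would not contain a redex at its hole, for otherwise the walk would have stopped there; a strictly larger one would require extending past the walk's stopping ES, which is impossible since either the grammar has no rule for that ES shape or its side condition fails. Uniqueness of the position $\rctx$ immediately gives $\tmfirst = \tmsec$.

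For \emph{no premature stops}, I would argue by contraposition: assume the walk extends through all the ESs of $\tm$ and show that $\tm$ has no $\topos$-redex. A hypothetical $\msym$-redex would force the walk to stop at its ES because the $\msym$-redex shape admits no extension rule. A hypothetical $\esym$-redex, whose applied side $\esub\var{\vartwo\varthree}$ sits strictly to the left of the abstraction binding $\esub\vartwo{\la\varfour\tmfour}$ by the shape of the $\esym$-rule, would force the walk to stop when it reaches the applied side: the abstraction has already been absorbed into $\rctx$, so $\rctx(\vartwo) = \la\varfour\tmfour$ and the condition fails. Either way the walk stops, contradicting the assumption.

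The step that will demand most care is verifying the correspondence between the grammar's side conditions and the actual positions of $\msym$- and $\esym$-redexes from \cref{fig:explicit-positive}. In particular, when the rightmost ES of $\tm$ is an abstraction binding $\esub\var{\la\vartwo\tmthree}$ and the inner part contains several applied occurrences of $\var$, only the rightmost such occurrence yields a valid right context for the corresponding $\esym$-redex: for every other candidate, the outside-in condition $\var\notin\afv{\rctx'}$ fails because an applied $\var$ sits further right within $\rctx'$. Matching this observation with the walk's behavior is what pins down uniqueness in the trickiest case.
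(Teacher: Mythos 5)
Your proof is correct and follows essentially the same route as the paper's: your ``walk'' is exactly the inside-out grammar of right contexts read as an algorithm, its stopping point is the paper's ``maximal right-context prefix'', and both arguments rest on the observation that the grammar's blocking conditions coincide precisely with the presence of an $\msym$- or $\esym$-redex at that position. The differences are only presentational (you prove no-premature-stops by contraposition and determinism via uniqueness of the redex-carrying right context, whereas the paper compares two candidate redex positions directly and shows the strictly larger one violates the right-context grammar), not substantive.
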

\end{toappendix}

% !TEX root = main.tex
\section{A Natural but Inefficient Positive Machine}
\label{sect:failure}
In this section, we give the natural abstract machine implementing the right strategy of the previous section, obtained by adding a basic mechanism for searching redexes, discuss the (in)efficiency of the machine, and explain how to modify it as to make it efficient. The tone is slightly informal. The next sections shall formally define and study the modified machine. 

Typical abstract machines for the $\l$-calculus in the literature are the Krivine abstract machine (KAM) or Felleisen and Friedman's CEK machine \cite{DBLP:conf/ifip2/FelleisenF87} that use many environments, closures, and never $\alpha$-rename. Here we rely on a different and simpler approach, having only one global environment (represented as a context), no closures, and using $\alpha$-renaming. For comparisons between the two approaches, see Accattoli and Barras \cite{DBLP:conf/ppdp/AccattoliB17}.

% !TEX root = ../main.tex
\begin{figure}[t!]
\centering
\fbox{
\begin{tabular}{c}
						$\begin{array}{lll || c || lll llllll}
						\multicolumn{7}{c}{\textsc{Transitions}}
						\\[4pt]
\textsc{Active code} & &\textsc{Right ctx}&&\textsc{Active code} & &\textsc{Right ctx}
\\\hhline{---||-||---}

\tm\esub\var{\la\vartwo\tmtwo} & \lhd & \rctx
& \leadsto_{\symfont{sea}_1} &
\tm & \lhd & \rctxp{\ctxhole\esub\var{\la\vartwo\tmtwo}}
\\
 \tm\esub\var{\vartwo\varthree} & \lhd & \rctx
& \leadsto_{\symfont{sea}_2} &
\tm & \lhd & \rctxp{\ctxhole\esub\var{\vartwo\varthree}} & (*)
\\
\tm\esub\var{\vartwo\varthree} & \lhd & \rctx
				    & \leadsto_{\esym} &
\tm\esub{\var}{\rename{(\la\varfour\tmtwo)}\varthree} & \lhd & \rctx
& (\#)
\\
%\tm\esub\var{\vartwo\varthree} & \lhd & \rctxp{\rctxtwo\esub\vartwo{\la\varfour\tmtwo}}
%				    & \leadsto_{\esym} &
%\tm\esub{\var}{\rename{(\la\varfour\tmtwo)}\varthree} & \lhd & \rctxp{\rctxtwo\esub\vartwo{\la\varfour\tmtwo}}
%\\
\tm\esub\var{(\la\vartwo{\evctxp{\varthree}})\varfour} & \lhd & \rctx
				    & \leadsto_{\msym} &
\evctxp{\tm\isub\var\varthree}\isub{\vartwo}{\varfour} & \lhd & \rctx
\end{array}$
\\[35pt]
(*) if $\vartwo\notin\dom\rctx$ or $\rctx(\vartwo)\neq\la\varfour\tmtwo$; \ \ \ \ \ (\#) if $\rctx(\vartwo)=\la\varfour\tmtwo$.
\end{tabular}
}
\caption{The Natural Positive Machine (Natural POM).}
\label{fig:natural-POM}
\end{figure}
\paragraph{The Natural POM.} States of the Natural POsitive Machine (Natural POM\footnote{Acronyms for abstract machines tend to end with AM for Abstract Machine. We avoided PAM, however, because there already is a PAM (Pointer Abstract Machine) in the literature, introduced by Danos et al. \cite{DBLP:conf/lics/DanosHR96}.
}) are pairs $\tm\lhd\rctx$ denoting a pointer $\lhd$ inside the structure of the term $\tmtwo \defeq \rctxp\tm$ represented by the state. The pointer represents the current position of the machine over $\tmtwo$. Initially, the pointer is at the rightmost position, that is, initial states have shape $\tm\lhd\ctxhole$. The Natural POM has the four transitions in \reffig{natural-POM}. The differences with respect to the positive $\l$-calculus are that:
\begin{itemize}
\item \emph{Search}: there are two search transitions $\seasym_1$ and $\seasym_2$ to move the pointer, in order to search for redexes, that move $\lhd$ right-to-left (hence the symbol);
\item \emph{Names}: there is a more controlled management of $\alpha$-conversion, which is performed only on the exponential step, when copying abstractions.
\end{itemize}
Transition $\seasym_1$ moves the abstraction from the active code to the right context. Transition $\seasym_2$ moves $\esub\var{\vartwo\varthree}$ when $\vartwo$ is a free variable (that is, $\vartwo\notin\dom\rctx$) or when it is bound by an ES in $\rctx$ but not one that contains an abstraction, so that the application $\vartwo\varthree$ does not give rise to a multiplicative redex. Note that these two cases are exactly the inside-out definition of right contexts in \refdef{right-ctx}.

\paragraph{Example of Run.} As an example of execution of the Natural POM, we consider the first few transitions of the infinite run for the positive representation $ \var\esub\var{\vartwo\vartwo}\esub\vartwo{\la\varthree\varfour\esub\varfour{\varthree\varthree}}$ of the paradigmatic looping $\l$-term $\Omega \defeq (\la\vartwo\vartwo\vartwo)(\la\varthree\varthree\varthree)$:
\begin{center}
$\begin{array}{l\colspace lllllllll}
\textsc{Active code} & &\textsc{Right ctx}
\\\hhline{---}
 \var\esub\var{\vartwo\vartwo}\esub\vartwo{\la\varthree\varfour\esub\varfour{\varthree\varthree}} & \lhd & \ctxhole 
 & \tomachseaone
\\
 \var\esub\var{\vartwo\vartwo} & \lhd & \esub\vartwo{\la\varthree\varfour\esub\varfour{\varthree\varthree}}
& \tomache
\\
\var\esub\var{(\la{\varthree'}{\varfour'}\esub{\varfour'}{\varthree'\varthree'})\vartwo} & \lhd & \esub\vartwo{\la\varthree\varfour\esub\var{\varthree\varthree}}
& \tomachm
\\
\varfour'\esub{\varfour'}{\vartwo\vartwo} & \lhd & \esub\vartwo{\la\varthree\varfour\esub\var{\varthree\varthree}}
& \tomache&\ldots
\\[2pt]
\end{array}$
\end{center}

\paragraph{Basics of the Complexity of Abstract Machines.} The problem with the Natural POM concerns the complexity of its overhead with respect to the calculus. Let us first recall the basics of the topic. For abstract machines, the complexity of the overhead for a machine run $\run$ of initial term $\tm$ is measured with respect to two parameters: the number of steps of the underlying calculus and the size $\size\tm$ of the initial term. A large number of machines has complexity linear in both parameters---shortened to \emph{bi-linear}---as first shown by Accattoli et al. \cite{DBLP:conf/icfp/AccattoliBM14}, and repeatedly verified after that for even more machines \cite{DBLP:conf/lics/AccattoliC15,DBLP:conf/aplas/AccattoliBM15,DBLP:conf/ppdp/AccattoliCGC19,DBLP:journals/scp/AccattoliG19,DBLP:conf/lics/AccattoliCC21}. The bi-linearity of the overhead then becomes a design principle, or, when it fails, a strong indication of an inefficiency, and that the machine can be improved.

The key property enabling complexity analyses of machines is the sub-term property already discussed for \Lpos. It ensures that the size of states cannot grow more than the size of the initial term at each transition. In turn, this fact usually implies that the time cost is also linearly bounded.

\paragraph{The Inefficiency of the Natural POM.} The natural POM inherits the sub-term property from the calculus. Its inefficiency is related to the multiplicative transition $\tomachm$, namely to the meta-level renaming $\tm\isub\var\varthree$ (referring to \reffig{natural-POM}). Meta-level substitutions are possibly costly. Usually, the danger is the time spent in making copies of the term to duplicate, which might be big and/or because many copies of it might be required. The size of the term to duplicate is not the problem here, since the involved term is a simple variable $\varthree$.

The culprit actually is the \emph{size of the scope} over which the renaming can take place, rather than the number of copies. There are two renamings. The renaming $\isub\vartwo\varfour$ is harmless: its scope seems to be $\evctxp{\tm\isub\var\varthree}$ but in fact $\vartwo$ can occur only in $\evctxp\varthree$ (which is the body of the abstraction of $\vartwo$ in the source state of the transition), and the sub-term property guarantees that the size of $\evctxp\varthree$ is bound by the size of the initial term. Therefore, one can propagate $\isub\vartwo\varfour$ on $\evctxp\varthree$ before computing the full reduct, staying within a linear cost.

For the renaming $\tm\isub\var\varthree$, however, there is in general no connection between $\tm$ and the initial term. For the first multiplicative step of the run, $\tm$ is a sub-term of the initial term. But the step itself re-combines $\evctx$ and $\tm$ as to create a new term unrelated to the initial one. Consider for instance the following run, where we assume that $\evctx$ captures $\varthree$ and that $\rename{(\la\vartwo(\evctxp{\varthree}\esub{\varthree'}\bite))} = \la{\vartwo'}(\evctxtwop{\varthree'}\esub{\varthree''}{\bite'})$:
\begin{equation}
\begin{array}{l\colspace lllllllll}
&\textsc{Active code} & &\textsc{Right ctx}
\\\hline
& \tm\esub\var{\var'\varfour}\esub{\var'}{\la\vartwo(\evctxp{\varthree}\esub{\varthree'}\bite)} & \lhd & \ctxhole 
\\[2pt]
\tomachseaone &
\tm\esub\var{\var'\varfour} & \lhd & \esub{\var'}{\la\vartwo(\evctxp{\varthree}\esub{\varthree'}\bite)}
\\[2pt]
\tomache &
\tm\esub\var{(\la{\vartwo'}(\evctxtwop{\varthree'}\esub{\varthree''}{\bite'}))\varfour} & \lhd & \esub{\var'}{\la\vartwo(\evctxp{\varthree}\esub{\varthree'}\bite)}
\\[2pt]
\tomachm &
\evctxtwop{\tm\isub\var{\varthree'}}\esub{\varthree''}{\bite'}\isub{\vartwo'}\varfour & \lhd & \esub{\var'}{\la\vartwo(\evctxp{\varthree}\esub{\varthree'}\bite)}
\end{array}
\label{eq:inefficiency}
\end{equation}
Now, if the bite $\bite'$ is a $\beta$-redex then the next transition is multiplicative and it is going to rename over $\evctxtwop{\tm\isub\var{\varthree'}}$ which  is not a sub-term of the initial term.

\paragraph{How Big is the Inefficiency?} When the time cost depends on something that is not a sub-term of the initial term, things can easily escalate up to exponential costs, as in the paradigmatic case of size explosion, see Accattoli \cite[Section 3]{DBLP:journals/lmcs/Accattoli23}. Luckily, here things do not go sideways, there is only a mild inefficiency. The key point is that the sub-term property for duplications \emph{does hold}: it ensures that the size of the whole state grows bi-linearly with the number of transitions, so there is no exponential growth. The problematic renaming---and thus each multiplicative transition---might then have to scan a scope that is at worst bi-linear in the length of the preceding run and the size of the initial term. A standard argument then gives a quadratic bound (in the number of steps and the size of the initial term) on the global cost of all multiplicative steps.

Let us give an example. We use a diverging term because it is the simplest example showcasing the phenomenon. We define the positive representation of $\Omega_3 \defeq \delta_3\delta_3$ where $\delta_3 \defeq \la\var\var(\var\var)$. Let $\tau_3 \defeq \var\esub\var{\vartwo\varthree}\esub\varthree{\vartwo\vartwo}$ and let $\tau_3' \defeq  \var'\esub{\var'}{\vartwo'\varthree'}\esub{\varthree'}{\vartwo'\vartwo'}$ and so on. The analogous of $\Omega_3$ is $\tau_3\esub\vartwo{\la\vartwo\tau_3}$, that runs as follows:
\begin{center}
$\begin{array}{l\colspace lllllllll}
\textsc{Active code} & &\textsc{Right ctx}
\\\hhline{---}
 \var\esub\var{\vartwo\varthree}\esub\varthree{\vartwo\vartwo} \esub\vartwo{\la\vartwo\tau_3}& \lhd & \ctxhole 
& \tomachseaone
\\[2pt]
\var\esub\var{\vartwo\varthree}\esub\varthree{\vartwo\vartwo} & \lhd & \esub\vartwo{\la\vartwo\tau_3} 
&\tomache
\\[2pt]
\var\esub\var{\vartwo\varthree}\esub\varthree{(\la{\vartwo'}\tau_3')\vartwo} & \lhd & \esub\vartwo{\la\vartwo\tau_3}
&=
\\[2pt]
 \var\esub\var{\vartwo\varthree}\esub\varthree{(\la{\vartwo'}\var'\esub{\var'}{\vartwo'\varthree'}\esub{\varthree'}{\vartwo'\vartwo'})\vartwo} & \lhd & \esub\vartwo{\la\vartwo\tau_3}
&\tomachm
\\[2pt]
\var\esub\var{\vartwo\var'}\esub{\var'}{\vartwo\varthree'}\esub{\varthree'}{\vartwo\vartwo} & \lhd & \esub\vartwo{\la\vartwo\tau_3}
& \tomache
\\[2pt]
\var\esub\var{\vartwo\var'}\esub{\var'}{\vartwo\varthree'}\esub{\varthree'}{(\la{\vartwo''}\tau_3'')\vartwo} & \lhd & \esub\vartwo{\la\vartwo\tau_3}
& \tomachm
\\[2pt]
\var\esub\var{\vartwo\var'}\esub{\var'}{\vartwo\var''}\esub{\var''}{\vartwo\varthree''}\esub{\varthree''}{\vartwo\vartwo} & \lhd & \esub\vartwo{\la\vartwo\tau_3} 
&\tomache&\ldots

\end{array}$
\end{center}
It is clear that the scopes of the renamings keep growing, even if the number of renamed occurrences by each renaming is constant.

\paragraph{Removing the Inefficiency: Slices.} An observation about the run \refeq{inefficiency} above suggests how to improve the situation. The idea is to delay the merging of $\tm$ and $\evctxtwop{\varthree'}$ as $\evctxtwop{\tm\isub\var{\varthree'}}$ (similarly to how ESs delay meta-level substitutions) by putting the pair $(\tm,\var)$---dubbed \emph{slice} and that shall actually be denoted with $\sliceentry\tm\var$---in a new stack storing delayed merges, and keeping as active code $\evctxtwop{\varthree'}$. The observation is that if $\bite'$ is a $\beta$-redex and generates a renaming of $\varthree''$ then one only needs to inspect $\evctxtwop{\varthree'}$---which, crucially, is a sub-term of the initial term---because $\varthree''$ cannot occur in $\tm$, given that it comes from the body of an abstraction out of $\tm$.

Before defining the Sliced POM, and prove that slices do solve the problem, we start over with a more formal approach to abstract machines.
% !TEX root = main.tex
\section{Preliminaries About Abstract Machines}
\label{sect:prel-machines}
In this section, we fix the terminology about abstract machines. We follow Accattoli and co-authors \cite{DBLP:conf/lics/AccattoliC15,DBLP:conf/aplas/AccattoliBM15,DBLP:conf/ppdp/AccattoliCGC19,DBLP:journals/scp/AccattoliG19,DBLP:conf/lics/AccattoliCC21}, adapting their notions to our framework. We mostly stay abstract. In the next section, we shall instantiate the abstract notions on a specific machine.

\paragraph{Abstract Machines Glossary.}  Abstract machines manipulate \emph{pre-terms}, that is, terms without implicit $\alpha$-renaming. In this paper, an \emph{abstract 
machine} is a quadruple $\mach = (\States, \tomach, \compilrel\cdot\cdot, \decode\cdot)$ the components of which are as follows.
\begin{itemize}

\item \emph{States.} A state $\state\in\States$ is composed by the \emph{active term} $\tm$, plus some data structures; the machine of the next section shall have two data structures. Terms in states are actually pre-terms.

\item  \emph{Transitions.} The pair $(\States, \tomach)$ is a transition 
system with transitions $\tomach$ partitioned into \emph{principal transitions}, whose union is noted $\tomachpr$ and that are meant to correspond to rewriting steps on the calculus, and \emph{search transitions}, whose union is noted $\tomachsea$, that take care of searching for (principal) redexes.

\item \emph{Initialization.} The component $\compilrel{}{}\subseteq\Lambda\times\States$ is the \emph{initialization relation} associating  terms to 
initial states. It is a \emph{relation} and not a function because $\compilrel\tm\state$ maps a term $\tm$ (considered modulo $\alpha$) to a state $\state$ having a \emph{pre-term representant} of $\tm$ (which is not modulo $\alpha$) as active term. Intuitively, any two states $\state$ and $\statetwo$ such that $\compilrel\tm\state$ and $\compilrel\tm\statetwo$ are $\alpha$-equivalent. 
A state $\state$ is \emph{reachable} if it can be reached starting from an initial state, that is, if $\statetwo \tomach^*\state$ where $\compilrel\tm\statetwo$ for some $\tm$ and $\statetwo$, shortened as $\compilrel\tm\statetwo \tomach^*\state$.

\item \emph{Read-back.} The read-back function $\decode\cdot:\States\to\Lambda$ turns reachable states into 
terms and satisfies the \emph{initialization constraint}: if $\compilrel\tm\state$ then $\decode{\state}=_\alpha\tm$.
\end{itemize}

\paragraph{Further Terminology and Notations.} A state is \emph{final} if no transitions apply.
 A \emph{run} $\run: \state \tomach^*\statetwo$ is a possibly empty finite sequence of transitions, the length of which is noted 
$\size\run$; note that the first and the last states of a run are not necessarily initial and final. %An \emph{initial run} is a run from an initial state $\compil\tm$, and it is also called \emph{a run from $\tm$}.  
If $a$ and $b$ are transitions labels (that is, $\tomachhole{a}\subseteq \tomach$ and 
$\tomachhole{b}\subseteq \tomach$) then $\tomachhole{a,b} \defeq \tomachhole{a}\cup \tomachhole{b}$ and $\sizep\run a$ 
is the number of $a$ transitions in $\run$.

\paragraph{Well-Boundness and Renamings.} For the machine in this paper, the pre-terms in initial states shall be \emph{well-bound}, that is, they have pairwise distinct bound names; for instance $\varfour\esub\varfour{\la\varthree\varthree}\esub\var{\la\vartwo\vartwo}$ is well-bound while $\varfour\esub\varfour{\la\vartwo\vartwo}\esub\var{\la\vartwo\vartwo}$ is not. 
We shall also write $\rename{\tm}$ in a state $\state$ for a \emph{fresh well-bound renaming} of $\tm$,
\ie $\rename{\tm}$ is $\alpha$-equivalent to $\tm$, well-bound, and its bound variables
are fresh with respect to those in $\tm$ and in the other components of $\state$.

\paragraph{Mechanical Bismulations.} Machines are usually showed to be correct with respect to a strategy via some form of bisimulation relating terms and states. The notion that we adopt is here dubbed \emph{mechanical bisimulation}. The definition, tuned towards complexity analyses, requires a perfect match between the steps of the evaluation sequence and the  principal transitions of the machine run. 

\begin{definition}[Mechanical bisimulation]
A machine $\mach=(\States, \tomach, \compilrel\cdot\cdot, \decode\cdot)$ and a strategy $\tostrat$ on terms are mechanical bisimilar when, given an initial state $\compilrel\tm\state$:\label{def:implem}
\begin{enumerate}
\item \emph{Runs to evaluations}: for any run $\run: \compilrel\tm\state \tomach^* \statetwo$ there exists an evaluation $\deriv: \tm \tostrat^* \decode\statetwo$;

\item \emph{Evaluations to runs}: for every evaluation $\deriv: \tm \tostrat^* \tmtwo$ there exists a 
run $\run: \compilrel\tm\state \tomach^* \statetwo$ such that $\decode\statetwo = \tmtwo$;

\item \emph{Principal matching}: for every principal transition $\tomachhole{\symfont{a}}$ of label $a$ of $\mach$, in both previous points the number $\sizep\run{\symfont{a}}$ of $\symfont{a}$-transitions in $\run$ is exactly the number $\sizep\deriv{a}$ of of $\symfont{a}$-steps in the evaluation $\deriv$, \ie $\sizep\deriv{\symfont{a}} = \sizep\run{\symfont{a}}$.
\end{enumerate}
\end{definition}

The proof that a machine and a strategy are in a mechanical bisimulation follows from some basic properties, grouped under the notion of \emph{distillery}, following Accattoli et al. \cite{DBLP:conf/icfp/AccattoliBM14} (but removing their use of structural equivalence, that here is not needed). The intuition behind the notion of distillery is that the calculus \emph{distils} the machine in that it removes the search mechanism---as captured by the search transparency property below---as well as the organization of states in data structures, while it faithfully mimics the underlying dynamics, as captured by the principal projection and halt properties. At the meta-level, the technique is also meant to distil the reasoning behind bisimulation proofs: the mechanical bisimulation theorem after the definition is proved abstractly by relying solely on the properties defining a distillery. 
\begin{definition}[Distillery]
  A machine $\mach=(\States, \tomach, \compilrel\cdot\cdot, \decode\cdot)$ and a strategy $\tostrat$ are a \emph{distillery} if the following conditions hold:\label{def:distillery}
  \begin{enumerate}
		\item\label{p:def-beta-projection} \emph{Principal projection}: $\state \tomachhole{\symfont{a}} \statetwo$ implies $\decode\state \Rew{\symfont{a}} \decode\statetwo$ for every principal transition of label $\symfont{a}$;
		\item\label{p:def-overhead-transparency} \emph{Search transparency}: $\state \tomachsea \statetwo$ implies $\decode\state = \decode\statetwo$;
		\item\label{p:def-overhead-terminate}	\emph{Search transitions terminate}:  $\tomachsea$ terminates;

	\item\label{p:def-determinism} \emph{Determinism}: $\tostrat$ is deterministic;

	\item\label{p:def-progress} \emph{Halt}: $\mach$ final states decode to $\tostrat$-normal terms.
  \end{enumerate}
\end{definition}

\begin{toappendix}
\begin{theorem}[Sufficient condition for mechanical bisimulations]
\NoteProof{thm:abs-impl}
  Let a machine $\mach$ and a strategy $\tostrat$ be a distillery.  Then, they are mechanical bisimilar.\label{thm:abs-impl}
\end{theorem}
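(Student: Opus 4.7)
The plan is to verify the three conditions of \cref{def:implem} using the five distillery properties.

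For \emph{Runs to evaluations}, I would decompose any run $\run \colon \compilrel\tm\state \tomach^* \statetwo$ into an alternating sequence of (possibly empty) maximal search segments $\tomachsea^*$ separated by individual principal transitions. Search transparency (\refpoint{def-overhead-transparency}) keeps the read-back fixed along each search segment, while principal projection (\refpoint{def-beta-projection}) turns each principal transition $\tomachhole{\symfont{a}}$ into a strategy step $\Rew{\symfont{a}}$ on the read-backs. Composing these yields an evaluation $\decode\state \tostrat^* \decode\statetwo$, with exactly one step of label $\symfont{a}$ per principal transition of label $\symfont{a}$. The initialization constraint gives $\decode\state =_\alpha \tm$, so this supplies both the required evaluation $\tm \tostrat^* \decode\statetwo$ and the principal-matching count in this direction.

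For \emph{Evaluations to runs}, I would proceed by induction on the length of $\deriv \colon \tm \tostrat^* \tmtwo$, strengthening the statement to allow starting from any reachable state $\state$ with $\decode\state =_\alpha \tm$, so that the inductive step can recurse on intermediate (non-initial) states. The base case is immediate, with $\state$ itself as target. For the inductive step $\tm \tostrat \tmthree \tostrat^* \tmtwo$, I would first iterate search transitions from $\state$ which, by termination (\refpoint{def-overhead-terminate}), reach a state $\statetwo$ with $\decode\statetwo =_\alpha \tm$. This $\statetwo$ cannot be final: otherwise Halt (\refpoint{def-progress}) would make $\decode\statetwo$ be $\tostrat$-normal, contradicting $\tm \tostrat \tmthree$. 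Hence $\statetwo$ admits a principal transition $\statetwo \tomachhole{\symfont{a}} \statethree$, giving $\decode\statetwo \Rew{\symfont{a}} \decode\statethree$ by principal projection. Determinism of $\tostrat$ (\refpoint{def-determinism}) then forces $\decode\statethree =_\alpha \tmthree$, so the inductive hypothesis applies to the shorter evaluation $\tmthree \tostrat^* \tmtwo$ starting from $\statethree$; prepending the initial search block and the principal transition yields the desired run, and the principal counts match step-by-step by construction.

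The main obstacle will be the routine but delicate bookkeeping around the initialization relation and the use of $\alpha$-equivalence in the read-back: in particular, one must verify that strategy steps and $\tostrat$-normality are both stable under $\alpha$-conversion so that the argument can freely move between $=$ and $=_\alpha$, and that reachability is preserved along the inductive recursion. The conceptually load-bearing observation is that termination of search (\refpoint{def-overhead-terminate}), Halt (\refpoint{def-progress}), and determinism (\refpoint{def-determinism}) together pin down, at each stage, the unique principal transition that must mirror the next strategy step.
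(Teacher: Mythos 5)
Your proposal is correct and matches the paper's proof in all essentials: the same five distillery properties are used in the same roles, and your inductive step for the evaluations-to-runs direction is precisely the paper's auxiliary one-step simulation lemma (pass to a $\tomachsea$-normal form via search termination, rule out finality via Halt, apply principal projection, then use determinism of $\tostrat$ to pin down the target). The only cosmetic difference is that you peel off the \emph{first} step of the evaluation (which forces the mild strengthening to arbitrary reachable states that you correctly flag), whereas the paper peels off the \emph{last} step and so keeps the induction anchored at the initial state.
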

\end{toappendix}
% !TEX root = main.tex
\section{The Sliced Positive Machine}
\label{sect:machine}
% !TEX root = ../main.tex
\begin{figure}[t!]
\centering
\fbox{
\begin{tabular}{cc}
$\begin{array}{c\colspace\colspace c\colspace\colspace c\colspace\colspace  c}
\textsc{Slice stacks}
&
\textsc{Environments}
&
\textsc{States}
&
\textsc{Initialization}
\\
\begin{array}{rcl}
\slice & \grameq & \emslice \mid \slice\cons \sliceentry\tm\var
\end{array}
&
\begin{array}{rcl}
\genv & \grameq & \emptylist \mid \esub\var\bite\cons \genv
\end{array}
&
\begin{array}{rcl}
\state & \grameq & \threestate\slice\tm\genv
\end{array}
&
\begin{array}{rcl}
\tm & \lhd & \threestate\emslice {\rename\tm}  \emptylist \ \ \ (*)
\end{array}

\end{array}$
			\\[10pt]
			\hline
			\textsc{Transitions}
			\\
						\footnotesize$\begin{array}{l|l|l || c  || l|l| |llll}
\textsc{Sl. st.} &\textsc{Active slice} &\textsc{Env}&&\textsc{Sl. st.} &\textsc{Active slice} &\textsc{Env}
\\\hhline{-|-|-||-||-|-|-|}

\slice  & \tm\esub\var{\la\vartwo\tmtwo}  & \genv
& \tomachseaone &
\slice  & \tm  & \esub\var{\la\vartwo\tmtwo}\cons\genv
\\[2pt]
\slice  & \tm\esub\var{\vartwo\varthree}  & \genv
& \tomachseatwo &
\slice  & \tm  & \esub\var{\vartwo\varthree}\cons\genv & (\%)
\\[2pt]
\slice  & \tm\esub\var{\vartwo\varthree}  & \genv
				    & \tomache &
\slice  & \tm\esub{\var}{\rename{(\la\varfour\tmtwo)}\varthree}  & \genv & (\#) (*)
\\[2pt]
\slice  & \tm\esub\var{(\la\vartwo\tmtwo)\varfour}  & \genv
				    & \tomachm &
\slice\cons\sliceentry\tm\var  & \tmtwo\isub{\vartwo}{\varfour}  & \genv
\\[2pt]
\slice\cons\sliceentry\tm\var  & \varthree  & \genv
	& \tomachseathree &
\slice  & \tm\isub\var\varthree  & \genv

					\end{array}$
\\[38pt]
(*) $\rename\tm$ is any well-bound code $\alpha$-equivalent to $\tm$ such that its bound\\   names are fresh
with respect to those in the rest of the state;
\\[4pt]
(\%) if $\vartwo\notin\dom\genv$ or $\genv(\vartwo)\neq\la\varfour\tmtwo$; \ \ \ \ \ (\#) if $\genv(\vartwo)=\la\varfour\tmtwo$.
\\[3pt]
\hline
$\begin{array}{r\colspace \colspace rcl\colspace\colspace rcl}
\multicolumn{7}{c}{\textsc{Read back}}
\\[2pt]
\textsc{Envs (to ctxs)} & \decode{\emptylist} & \defeq & \ctxhole
&
\decode{\esub\var\bite\cons\genv} & \defeq & \decodep\genv{\ctxhole\esub\var\bite}
\\[3pt]
\textsc{States (to terms)} & \decode{\threestate\emslice\tm\genv} & \defeq & \decodep\genv\tm
&
\decode{\threestate{\slice\cons\sliceentry\tm\var}{\evctxp\vartwo}\genv} & \defeq & \decode{\threestate\slice {\evctxp{ \tm\isub\var\vartwo }} \genv}
\end{array}$

\end{tabular}

}
\caption{The Sliced Positive Machine (Sliced POM).}
\label{fig:sliced-machine}
\end{figure}
In this section, we present the optimized machine for the positive $\l$-calculus and establish the bisimulation between the new machine and the right strategy $\tor$.

\paragraph{The Sliced POM: Data Structures.} The Sliced POsitive Machine (Sliced POM) is defined in \reffig{sliced-machine}. The machine has two data structures, the environment $\genv$ and the slice stack $\slice$. The active code is re-dubbed \emph{active slice}.

Environments $\genv$ are nothing else but encodings of right contexts $\rctx$. We prefer to change the representation for two reasons. Firstly, it is closer to how contexts are represented in the OCaml implementation. Secondly, it is a bit more precise: environments are free structures, and an invariant shall prove that they decode to right contexts (which are defined via some additional conditions).

The slice stack is simply a list of slices $\sliceentry\tm\var$. Note the duality between environments and slice stacks: the entries of both are pairs of a term $\tm$ and a variable $\var$, but environment entries are meant to substitute $\tm$ for $\var$ on some other term, while slices are meant to receive a variable and  substitute it for $\var$ in $\tm$.

\paragraph{The Sliced POM: New Transition.} The Sliced POM inherits the same four transitions of the Natural POM, but note that in transition $\tomachm$ now it is no longer necessary to decompose the body $\tmtwo$ of the abstraction as $\evctxp\varthree$ (please note that $\tmtwo$ is a meta-variable for \emph{terms}, and not for variables). Additionally, the Sliced POM has a new search transition $\seasym_3$. When the evaluation of the active slice is over, which happens when one is left only with the head variable $\varthree$ of the slice, the new transitions $\seasym_3$ pops the first slice on the slice stack and replaces $\varthree$ for $\var$ in $\tm$. An invariant shall guarantee that $\tm$ is a sub-term of the initial term, so that the cost of the renaming now is under control.

\paragraph{A Technical Point.} The attentive reader might wonder why $\tomachm$ and $\tomachseathree$ are not reformulated in the following eager way (with slices reduced to be simply terms), where the head of the abstraction is substituted when the slice is \emph{pushed} on the slice stack, and not when it is popped:
\begin{center}
$\begin{array}{l|l|l || c  || l|l|l |llll}
\textsc{Sl. st.} &\textsc{Active slice} &\textsc{Env}&&\textsc{Sl. st.} &\textsc{Active slice} &\textsc{Env}
\\\hline
\slice  & \tm\esub\var{(\la\vartwo\evctxp\varthree)\varfour}  & \genv
				    & \tomachm &
\slice\cons\tm\isub\var\varthree  & \evctxp\varthree\isub{\vartwo}{\varfour}  & \genv
\\[2pt]
\slice\cons\tm  & \varthree  & \genv
	& \tomachseathree &
\slice  & \tm  & \genv
\end{array}$
\end{center}
We do not adopt the eager approach because the evaluation of the active slice might change its head variable (thus making it unsound to substitute eagerly), as it is demonstrated by the following run of the standard Sliced POM:
\begin{center}
$\begin{array}{l|l|l  l}
\textsc{Slice stack} &\textsc{Active slice} &\textsc{Env}&
\\\hhline{|-|-|-|}
\emptylist& \tm\esub\var{(\la\vartwo\varthree\esub\varthree{(\la{\var'}{\var'})\varfour'})\varfour}&\emptylist & \tomachm
\\\hhline{|-|-|-|}
\emptylist\cons\sliceentry\tm\var& \varthree\esub\varthree{(\la{\var'}{\var'})\varfour'}&\emptylist & \tomachm
\\\hhline{|-|-|-|}
\emptylist\cons\sliceentry\tm\var\cons\sliceentry\varthree{\varthree}& \varfour'&\emptylist & \tomachseathree
\\\hhline{|-|-|-|}
\emptylist\cons\sliceentry\tm\var& \varfour'&\emptylist &
\end{array}$
\end{center}
% !TEX root = main.tex
%\section{Mechanical Bisimulation}
%\label{sect:mech-bisim}

\paragraph{Invariants.} To establish the  bisimulation between the strategy $\tor$ of the positive $\l$-calculus and the Sliced POM, we shall prove that the two form a distillery, that, in turn, is proved using the following invariants of the machine.
\label{sect:bisimulation}
\begin{lemma}[Qualitative invariants]
Let $\state = \threestate\slice\tm\genv$ be a reachable state. \label{l:invariants}
\begin{enumerate}
  \item \label{p:invariants-rb} \emph{Contextual read-back}:
  $\decode\genv$ is a right context.

  \item \label{p:invariants-bv} \emph{Well-bound}: 
  \begin{itemize}
  \item \emph{Bound names in terms}: if $\la\var\tmtwo$ or $\tmtwo\esub\var\bite$ occurs in $\state$ then any
other occurrence of $\var$ in $\state$, if any, is a free variable occurrence of $\tmtwo$;
  \item \emph{ES names of environment entries}: for any ES $\esub\vartwo\bite$ in $\genv$ the name $\vartwo$ can occur (in any form) only on the left of that ES in $\state$.
  \end{itemize}
\end{enumerate}
\end{lemma}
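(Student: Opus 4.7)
The plan is to proceed by induction on the length of the run $\run : \compilrel\tm{\state_0} \tomach^* \state$ that reaches $\state$. For the base case, the initial state has shape $\state_0 = \threestate\emslice{\rename\tm}\emptylist$: the empty environment decodes to $\decode\emptylist = \ctxhole$, which is trivially a right context, and the renaming $\rename\tm$ is well-bound by definition, with no other components in the state that could collide with it.

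For the inductive step, I would fix a transition $\statetwo \tomach \state$ with $\statetwo$ satisfying both invariants and case-analyze on the transition. Only $\tomachseaone$ and $\tomachseatwo$ modify the environment, so the \emph{contextual read-back} invariant must be verified only in these two cases; the transitions $\tomache$, $\tomachm$, and $\tomachseathree$ leave $\genv$ untouched and therefore preserve it for free. For $\tomachseaone$ the new environment decodes (by the read-back clause and the \ih) to $\rctxp{\ctxhole\esub\var{\la\vartwo\tmtwo}}$ for some right context $\rctx$, which is a right context by the inside-out clause with no side condition. For $\tomachseatwo$ the new environment decodes to $\rctxp{\ctxhole\esub\var{\vartwo\varthree}}$; here the machine's side condition $\vartwo\notin\dom\genv$ or $\genv(\vartwo)\neq\la\varfour\tm$ translates directly to the inside-out side condition $\rctx(\vartwo)\neq\la\varfour\tm$, by an auxiliary observation that $\decode\genv(\vartwo) = \genv(\vartwo)$ whenever $\vartwo$ is in the domain. \cref{l:right-ctxs-coincide} lets us use either presentation.

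For the \emph{well-bound} invariant, the work splits into checking, for each transition, that (i) no existing bound name is duplicated, (ii) any newly exposed bound name respects the ``free-only-in-its-body'' restriction, and (iii) the ES names of the (possibly enlarged) environment occur only to the left. The search transitions $\tomachseaone$ and $\tomachseatwo$ simply migrate an ES from the active slice to the head of $\genv$: since by \ih\ the ES name $\var$ could occur in $\statetwo$ only as a free variable of the term $\tm$ immediately to the left of the ES, after the migration $\var$ still occurs only in $\tm$, which lies on the left of the new environment. For $\tomache$, freshness of the renaming $\rename{(\la\varfour\tmtwo)}$ with respect to the whole state ensures that the newly introduced bound names cannot clash with anything else; the remaining ES $\esub\var{\cdot}$ keeps the same name on the left. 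For $\tomachseathree$, the renaming $\tm\isub\var\varthree$ replaces occurrences of the name $\var$ (bound inside the popped slice by the corresponding earlier $\tomachm$ step) by the variable $\varthree$, and since by \ih\ $\var$ was confined to $\tm$, no other component is affected.

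The main obstacle is the $\tomachm$ case: it simultaneously removes an ES from the active slice, pushes a slice $\sliceentry\tm\var$ on the stack, and performs the meta-substitution $\isub{\vartwo}{\varfour}$ on $\tmtwo$. I would establish that $\vartwo$, being a bound name of the source state, occurs only as a free variable of $\tmtwo$ (by the \ih), so that $\isub{\vartwo}{\varfour}$ substitutes inside $\tmtwo$ without capture and leaves the rest of the state untouched; and that the name $\var$ pushed into the slice stack remains subject to the bound-names clause because, by the \ih\ on the source, its sole other occurrence is as a free variable of $\tm$, which is now stored in the slice entry $\sliceentry\tm\var$ to the left of the new active slice. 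The bookkeeping about what ``left of'' means for the new slice stack, in particular verifying that the pushed slice does not expose $\var$ to the still-present environment on the right, is the delicate part; once this is formalized it suffices to conclude.
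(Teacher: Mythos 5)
Your proposal is correct and follows essentially the same route as the paper: induction on the length of the run, a trivial base case for the initial state, and a case analysis on the last transition, with the contextual read-back point resting on the inside-out characterization of right contexts (via \cref{l:right-ctxs-coincide}). The paper only sketches this argument in a few lines, so your elaboration — in particular the careful treatment of the $\tomachm$ case and of what ``to the left'' means for the slice stack — is a faithful filling-in of the same proof rather than a different one.
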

\begin{proof}
By induction on the length of the run reaching $\state$. For both points, the base case trivially holds, and the inductive case is by analysis of the last transition, which is always a straightforward inspection of the transitions using the \ih For contextual read-back, the proof relies on the inside-out definition of right contexts (see \refdef{right-ctx}).\qed
\end{proof}
The well-bound invariant has two consequences that might not be evident. Firstly, there cannot be two ESs $\esub\var\bite$ and $\esub\var\bitetwo$ on the same variable. This fact implies the determinism of the machine, that however shall not be proved because it is not necessary (determinism of the strategy is enough for proving the bisimulation). Secondly, the name $\var$ of an environment entry $\esub\var\bite$ can occur both in the active slice and in the slice stack, while the names of ESs in slices can occur only within the slice. In particular, $\var$ cannot occur in $\slice$ in $\threestate\slice{\tm\esub\var\bite}\genv$.

\begin{toappendix}
\begin{theorem}[Distillery]
\label{thm:distillery}\NoteProof{thm:distillery}\hfill
\begin{enumerate}
  \item Principal projection: 
    \begin{enumerate}
      \item if $\state \tomache \statetwo$ then $\decode{\state} \toe \decode{\statetwo}$.
      \item if $\state \tomachm \statetwo$ then $\decode{\state} \tom \decode{\statetwo}$.
    \end{enumerate}
  \item Search transparency: if $\state \leadsto_{\symfont{sea}_1, \symfont{sea}_2, \symfont{sea}_3} \statetwo$ then $\decode{\state} = \decode{\statetwo}$.
  \item Search terminates: $\leadsto_{\symfont{sea}_1, \symfont{sea}_2, \symfont{sea}_3}$ is terminating.
  \item Halt: if $\state$ is final then it is of the form $\threestate\emslice\var\genv$ and $\decode{\state} = \decodep\genv\var$ is $\topos$-normal.
\end{enumerate}
\end{theorem}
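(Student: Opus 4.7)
The plan is to verify the four clauses using the invariants of \reflemma{invariants}, after introducing an auxiliary unfolding of the slice stack. For a state $\threestate\slice\tm\genv$ I would define the slice-unfolding $\tm^\slice$ by $\tm^\emslice \defeq \tm$ and $\tm^{\slice\cons\sliceentry\tmthree\varthree} \defeq (\evctxp{\tmthree\isub\varthree\vartwo})^\slice$, where $\tm = \evctxp\vartwo$ is the unique decomposition with $\vartwo$ the head variable of $\tm$. A direct induction on $\slice$ then gives $\decode{\threestate\slice\tm\genv} = \decodep\genv{\tm^\slice}$, recasting the read-back in a form amenable to calculation on the ES being fired.

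For principal projection, consider $\tomache$, which sends $\tm\esub\var{\vartwo\varthree}$ to $\tm\esub\var{\rename{(\la\varfour\tmtwo)}\varthree}$ under $\genv(\vartwo) = \la\varfour\tmtwo$, with $\slice$ and $\genv$ unchanged. Since slice-unfolding touches only the head variable of the active slice, it transforms source and target in parallel, so $\decode\state$ and $\decode\statetwo$ differ precisely at the modified ES. By contextual read-back, $\decode\genv$ is a right context containing $\esub\vartwo{\la\varfour\tmtwo}$ at a more outer position than the redex, so the decoded step is an $\toe$-reduction at that position. The case $\tomachm$ is symmetric but additionally pushes a slice $\sliceentry\tm\var$ onto the stack: using the well-bound invariant for $\var$ and $\vartwo$, the substitutions $\isub\var\varthree$ and $\isub\vartwo\varfour$ of the $\rtom$ root rule commute with slice-unfolding (after choosing $\varthree$ as the head variable of $\tmtwo$), so the target decodes to the $\tom$ reduct of $\decode\state$ at a right position.

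Search transparency falls out of the read-back clauses: $\tomachseaone$ and $\tomachseatwo$ move an ES from the active slice to an innermost position of the environment, which is precisely the action of $\decode{\esub\var\bite\cons\genv} = \decodep\genv{\ctxhole\esub\var\bite}$; and $\tomachseathree$ is the read-back clause for non-empty slice stacks instantiated with $\evctx = \ctxhole$. Termination follows from the lexicographic measure $(\size\slice, \sizees\tm)$: $\tomachseaone$ and $\tomachseatwo$ leave $\size\slice$ unchanged while strictly decreasing the number of ESs in the active slice, while $\tomachseathree$ strictly decreases $\size\slice$.

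For halt, if $\state = \threestate\slice\tm\genv$ is final and $\tm = \tmtwo\esub\var\bite$, a case analysis on $\bite$ shows that one of $\tomachseaone$, $\tomache$, $\tomachseatwo$, or $\tomachm$ applies; hence $\tm$ must be a variable $\varthree$, and non-emptiness of $\slice$ would trigger $\tomachseathree$, so $\slice = \emslice$ and $\decode\state = \rctxp\varthree$ with $\rctx \defeq \decode\genv$ a right context by \reflemma{invariants}. The term $\rctxp\varthree$ is $\topos$-normal: the inside-out grammar of right contexts never produces an ES of shape $\esub\var{(\la\vartwo\tmtwo)\varfour}$, ruling out $\tom$-redexes; and a hypothetical $\toe$-redex inside $\rctxp\varthree$ would require $\rctx$ to contain an ES $\esub\var{\vartwo\varthreep}$ together with, at a more outer position, $\esub\vartwo{\la\varfour\tmfour}$, but then when the inside-out construction added $\esub\var{\vartwo\varthreep}$ it would have $\rctx(\vartwo) = \la\varfour\tmfour$, violating the side condition. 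The main obstacle is principal projection, where one must carefully track how slice-unfolding commutes with the substitutions performed by $\tomachm$ and $\tomache$, relying on the well-bound invariant to keep bound names disjoint and on contextual read-back to place the redex inside a right context.
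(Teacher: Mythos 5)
Your proof is correct and follows essentially the same route as the paper: the same invariants, the same case analyses, and the same induction on the slice stack, which you merely factor into an explicit unfolding lemma $\decode{\threestate\slice\tm\genv}=\decodep\genv{\tm^\slice}$ plus commutation of the unfolding with the renamings, where the paper inlines that induction into each transition case (your termination measure and your more detailed normality argument for the halt clause are also fine, the paper being terser on both). No gaps.
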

\end{toappendix}

The points of the proved theorem together with determinism of the right strategy (\reflemma{right-strat-properties}) provide all the requirements for a distillery. Then, the abstract theorem about distilleries (\refthm{abs-impl}) gives the following corollary.

\begin{corollary}[Mechanical bisimulation]
The Sliced POM and the right strategy $\tor$ are in a mechanical bisimulation.
\end{corollary}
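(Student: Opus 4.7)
The plan is to obtain the corollary as a direct instance of the abstract sufficient condition for mechanical bisimulations (\refth{abs-impl}), so essentially no new work is needed: I only need to check that the pair $(\text{Sliced POM}, \tor)$ satisfies every clause of \refdef{distillery}.

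First I would list the five requirements of a distillery and pair each one with an already-established result. Principal projection (\refpointp{def-distillery}{beta-projection}) is given by points 1(a) and 1(b) of \refth{distillery}, where $\tomache$ is shown to project to $\toe$ and $\tomachm$ to $\tom$; since both projected steps are at the position determined by the environment $\genv$, which is a right context by the contextual read-back invariant (\reflemmap{invariants}{rb}), these projected steps are in fact right steps, i.e.\ steps of $\tor$. Search transparency (\refpointp{def-distillery}{overhead-transparency}) is precisely point 2 of \refth{distillery}, covering all three search transitions $\tomachseaone$, $\tomachseatwo$, $\tomachseathree$. Termination of search (\refpointp{def-distillery}{overhead-terminate}) is point 3 of the same theorem. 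Determinism of $\tor$ (\refpointp{def-distillery}{determinism}) is the first part of \reflemma{right-strat-properties}. The halt property (\refpointp{def-distillery}{progress}) follows from point 4 of \refth{distillery}: a final state has the form $\threestate\emslice\var\genv$ whose read-back $\decodep\genv\var$ is $\topos$-normal, hence a fortiori $\tor$-normal.

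Once these five items are checked, I would invoke \refth{abs-impl} to conclude that the Sliced POM and $\tor$ are mechanical bisimilar in the sense of \refdef{implem}. The only subtlety worth flagging, which I would state explicitly for the reader, is the no-premature-stop property in the converse direction: if $\tm \tor \tmtwo$ on the calculus side, one needs to know that from the initial state $\compilrel\tm\state$ the machine can actually reach a state that projects to $\tmtwo$. This is handled inside the proof of \refth{abs-impl} by combining search termination, search transparency and principal projection, together with the second half of \reflemma{right-strat-properties} (no premature stops of $\tor$); this last ingredient ensures that whenever $\decode\state$ has a $\topos$-redex it has a right redex, which the machine will eventually fire after finitely many search transitions.

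There is no real obstacle here: the corollary is a packaging step. The only place where one has to be a bit careful is in matching the labels used by \refdef{implem} for the principal-matching clause; this is immediate because the Sliced POM declares $\tomachm$ and $\tomache$ as principal and the projection theorem preserves these labels one-to-one, so $\sizep\run\msym=\sizep\deriv\msym$ and $\sizep\run\esym=\sizep\deriv\esym$ follow by a trivial induction on the length of $\run$ (resp.\ $\deriv$) already contained in the proof of \refth{abs-impl}.
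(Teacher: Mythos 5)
Your proposal is correct and follows exactly the paper's own route: the corollary is obtained by checking the five distillery clauses against \refthm{distillery} and \reflemma{right-strat-properties}.1, and then invoking \refthm{abs-impl}. Your explicit remark that the projected $\toe$/$\tom$ steps are in fact $\tor$-steps (via the contextual read-back invariant) and your mention of the no-premature-stops property are slightly more careful than the paper's two-sentence justification, but they do not change the argument.
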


% !TEX root = main.tex
\section{Complexity Analysis}
\label{sect:complexity}
In this section, we show that the Sliced POM can be concretely implemented within a bi-linear overhead, that is, linear in the number of $\msym$-steps/transitions and the size of the initial term.

\paragraph{Sub-Term Property.} As it is standard, the complexity analysis crucially relies on the sub-term property. In \cbv settings, the property is usually expressed saying that all values are sub-terms of the initial terms, as in \reflemma{sub-term-calculus}. The Sliced POM only duplicates values too, but (as we have discussed for the Natural POM) we also want to know that the terms to which renamings are applied are sub-terms of the initial term, and these terms are not values. Thus, the property is given with respect to \emph{all} terms in a state.

There is in fact a very minor exception. The substitution performed by an exponential transition takes two sub-terms of the initial term, namely $\tm\esub{\var}{\vartwo\varthree}$ and $\la\varfour\tmtwo$, and creates a term $\tm\esub{\var}{\rename{(\la\varfour\tmtwo)}\varthree}$ that is not a sub-term of the initial one. The new term, however, is very short lived: the next transition is multiplicative and decomposes that term in sub-terms of the initial term (up to renaming).

\begin{toappendix}
\begin{lemma}[Sub-term property]
Let $\compilrel\tm\state \tomach^*\statetwo=\threestate\slice\tmtwo\genv$ be a Sliced POM run. Then:\label{l:sub-term}
\begin{enumerate}
\item If $\statetwo$ is not the target of a $\esym$-transition then $\size\tmthree\leq\size\tm$ for any term $\tmthree$ in $\statetwo$;
\item Otherwise, $\size\tmthree\leq\size\tm$ for any term $\tmthree$ in $\statetwo$ except $\tmtwo$.
\end{enumerate}
\end{lemma}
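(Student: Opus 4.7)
The plan is to proceed by induction on the length $n$ of the run $\compilrel\tm\state \tomach^n \statetwo$. The base case is immediate: the only term in the initial state $\threestate\emslice{\rename\tm}\emptylist$ is $\rename\tm$, which is $\alpha$-equivalent to $\tm$ and hence of the same size, and the initial state is not the target of any transition, so statement 1 applies trivially. For the inductive step, I would analyze the last transition $\state' \tomach \statetwo$ and rely on the inductive hypothesis applied to $\state'$.

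\textbf{Key preliminary observation.} A crucial and easy observation that I would establish first is that a $\tomache$ transition can only be followed by a $\tomachm$ transition: inspecting the transitions in \cref{fig:sliced-machine}, after $\tomache$ the active slice has shape $\tm'\esub\var{(\la{\varfour'}\tmtwo')\varthree}$, and the only transition that can consume a bite of the form $(\la{\varfour'}\tmtwo')\varthree$ on the right of the active slice is $\tomachm$ (the guard of $\tomachseatwo$ explicitly requires the ES body to be an applied variable without an abstraction in the environment, while $\tomachseaone$ requires a bare abstraction, and $\tomache$ requires $\vartwo\varthree$). This observation ensures that for every case \emph{other} than $\tomachm$, the previous state $\state'$ cannot be the target of $\tomache$, so the inductive hypothesis gives the stronger statement~1 for all terms of $\state'$.

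\textbf{Routine cases.} For $\tomachseaone$, $\tomachseatwo$ and $\tomachseathree$, the IH (in its strong form, by the observation above) bounds every term appearing in $\state'$ by $\size\tm$. All terms appearing in $\statetwo$ are either unchanged from $\state'$ or are obtained by taking a subterm (in the sea cases the ES is just moved from the active slice to the environment) or by applying a variable-for-variable substitution $\isub\var\varthree$ (in $\tomachseathree$), which preserves size. Since $\statetwo$ is then not the target of a $\tomache$, statement~1 holds for $\statetwo$.

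\textbf{The multiplicative case, and the main obstacle.} The case $\tomachm$ requires splitting on whether $\state'$ is or is not the target of a $\tomache$, and is the main obstacle because in the second sub-case the active slice of $\state'$ is \emph{not} bounded by $\size\tm$. Let me trace the origin of each new component of $\statetwo = \threestate{\slice\cons\sliceentry{\tm'}\var}{\tmtwo'\isub{\vartwo}\varfour}\genv$. If $\state'$ is not the target of $\tomache$, the strong IH directly bounds $\tm'$, $\la\vartwo\tmtwo'$ and $\varfour$ by $\size\tm$, and variable-for-variable substitution preserves size, so all terms in $\statetwo$ are bounded. If instead $\state'$ is the target of $\tomache$, then $\state'$ came from some $\state''$ with active slice $\tm'\esub\var{\vartwo\varthree}$ and environment entry $\genv(\vartwo)=\la\varfour\tmtwo'$ (both bounded by $\size\tm$ by the IH applied to $\state''$, which by the observation is itself not the target of $\tomache$); the term $\tm'$ pushed onto the slice stack is exactly the subterm of the active slice of $\state''$, hence bounded by $\size\tm$, and the new active slice is $\tmtwo^*\isub{\varfour^*}\varthree$ for some $\alpha$-renaming of the abstraction stored in $\genv(\vartwo)$, whose body $\tmtwo^*$ therefore satisfies $\size{\tmtwo^*}=\size{\tmtwo'}\leq\size\tm$. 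The slice stack below the new entry and the environment are identical in $\state''$, $\state'$ and $\statetwo$, hence bounded by IH on $\state''$. Since $\statetwo$ is the target of $\tomachm$, not of $\tomache$, statement~1 holds for $\statetwo$. The delicate point, and the main obstacle to make fully rigorous, is exactly this tracing back two steps in the second sub-case in order to bound the newly-pushed slice entry $\tm'$, whose provenance is the forbidden (size-unbounded) active slice of the intermediate state $\state'$.
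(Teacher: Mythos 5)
Your proposal is correct and follows essentially the same route as the paper, whose proof is just the one-liner ``by induction on the length of the run, inspecting the last transition and using the \ih''; your write-up supplies the details that actually make this work, in particular the observation that $\tomache$ is always immediately followed by $\tomachm$, which is exactly what is needed to bound the slice entry pushed by a multiplicative transition whose source is an $\esym$-target. The only (harmless) omission is that the case where the last transition is $\tomache$ itself is not listed among your cases, but it is immediate: its target is precisely the state exempted by item~2, and the slice stack and environment are unchanged.
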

\end{toappendix}
\begin{proof}
By induction on the length of the run, inspecting the last transition and using the \ih\qed
\end{proof}

\paragraph{Number of Transitions.} Some basic observations about the transitions, together with the sub-term property, allow us to bound their number using the two key parameters (that is, number of $\msym$-steps/transitions and size of the initial term).

%\begin{toappendix}
\begin{lemma}[Number of transitions]
Let $\run:\compilrel\tm\state \tomach^*\statetwo$ be a Sliced POM run.$\label{l:number-single-transitions}$
\begin{enumerate}
\item $\sizep\run{\esym,\seasym_3}\in\bigo(\sizep\run\msym)$;
\item $\sizep\run{\seasym_1,\seasym_2}\in\bigo(\size{\tm}\cdot(\sizep\run\msym +1))$.
\end{enumerate}
\end{lemma}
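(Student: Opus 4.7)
The two bounds are proved by independent counting arguments, both exploiting the sub-term property (\reflemma{sub-term}).

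\emph{Point 1.} For $\sizep\run\esym$, I would observe that after any $\tomache$ transition the active slice has shape $\tmtwo\esub{\var}{(\la\varfour\tmthree)\varthree}$, whose outermost ES matches the left-hand side of $\tomachm$ and of no other transition in \reffig{sliced-machine}. Hence every $\tomache$ is immediately followed by a $\tomachm$, giving $\sizep\run\esym \leq \sizep\run\msym$. For $\sizep\run{\seasym_3}$, slice stack entries are pushed only by $\tomachm$ and popped only by $\tomachseathree$, so $\sizep\run{\seasym_3} \leq \sizep\run\msym$. Together these yield $\sizep\run{\esym,\seasym_3} \in \bigo(\sizep\run\msym)$.

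\emph{Point 2.} The plan is to bound the total number of ESs that ever appear as the outermost ES of an active slice, since each $\seasym_1$ or $\seasym_2$ transition consumes exactly one such ES. I would identify the sources of such ESs as: (i) the ESs present in the initial active slice, at most $\size\tm$ of them; (ii) for each $\tomachm$, the ESs inside the body $\tmthree$ of the duplicated abstraction, which become part of the new active slice $\tmthree\isub{\vartwo}{\varfour}$. By the sub-term property, $\tmthree$ is, up to renaming, a sub-term of the initial term, so it contains at most $\size\tm$ ESs. Crucially, $\tomachseathree$ does \emph{not} introduce new ESs to be processed: the term $\tmtwo$ in a popped slice $\sliceentry\tmtwo\var$ was the prefix of the active slice at the moment the corresponding $\tomachm$ pushed the slice, so its ESs had already been generated at that point and were merely deferred to the slice stack rather than processed. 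Similarly, $\tomache$ modifies an existing ES without changing the count. Summing, $\sizep\run{\seasym_1,\seasym_2} \leq \size\tm \cdot (\sizep\run\msym + 1)$.

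\emph{Main obstacle.} The delicate part is the ``no double-counting'' argument for $\tomachseathree$. The cleanest way to formalize it is via an invariant along the run: letting $E(\state)$ denote the total number of ES occurrences summed over the active slice and the terms stored in the slice stack of $\state$, and writing $k$ for the number of $\tomachm$ transitions in the prefix run leading to $\state$, one establishes $E(\state) + \sizep\run{\seasym_1,\seasym_2} \leq \size\tm \cdot (k+1)$ by induction on the length of the run. Transitions $\tomachseaone$ and $\tomachseatwo$ decrease $E$ by one while incrementing $\sizep\run{\seasym_1,\seasym_2}$ by one; $\tomache$ and $\tomachseathree$ leave $E$ unchanged; only $\tomachm$ increases $E$, and by at most $\size\tm$ thanks to the sub-term property. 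The stated bound then follows immediately.
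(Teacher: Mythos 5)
Your proof is correct and rests on the same two pillars as the paper's: the observation that $\esym$ and $\seasym_3$ transitions are each tied to $\msym$ transitions, and the sub-term property. Point 1 is essentially the paper's argument (the paper states the slightly more careful bound $\sizep\run\esym\le\sizep\run\msym+1$, since a run may end on an $\esym$ transition with no $\msym$ after it; this does not affect the $\bigo$). For Point 2 your bookkeeping differs from the paper's in a mildly interesting way: the paper measures the \emph{size} of the active slice alone and charges its growth to the $\esym$ and $\seasym_3$ transitions (each increase bounded by $\size\tm$ via the sub-term property, and there being $\bigo(\sizep\run\msym)$ many such transitions by Point 1), whereas you count \emph{ES occurrences} over the active slice \emph{and} the slice stack and charge all growth to $\msym$ directly, so that $\seasym_3$ becomes a mere transfer between the two components. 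Your potential $E$ is arguably the sharper invariant and yields the explicit constant $\size\tm\cdot(\sizep\run\msym+1)$ without detouring through Point 1. One imprecision to fix: as stated, ``$\tomache$ leaves $E$ unchanged'' is false if $E$ counts \emph{all} ES occurrences, because the copied abstraction $\rename{(\la\varfour\tmtwo)}$ may contain ESs in its body $\tmtwo$ (cf.\ the abstraction $\la\varthree\varfour\esub\varfour{\varthree\varthree}$ in the paper's representation of $\Omega$). You need to define $E$ as counting only the ESs \emph{not under an abstraction}, i.e.\ the spine entries of the active slice and of the slice-stack terms; with that definition $\esym$ is indeed neutral, and the ESs of the body are correctly charged to the $\msym$ transition that exposes them, exactly as your item (ii) intends.
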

%\end{toappendix}

\begin{proof}
%\applabel{prop:number-single-transitions}

\begin{enumerate}
\item $\sizep\run{\esym} \le \sizep\run\msym+1$ because exponential transitions can only be followed by multiplicative transitions.

$\sizep\run{\seasym_3} \le \sizep\run\msym$ because every $\seasym_3$ transition consumes one entry from the slice stack, which are created only by $\msym$ transitions. 
\item Note that $\seasym_1/\seasym_2$ transitions decrease the size of the active slice, which is increased only by transitions $\esym$ and $\seasym_3$. By the sub-term property (\reflemma{sub-term}), the size increase of the active slice by transitions $\esym$ and $\seasym_3$ is bounded by the size $\size\tm$ of the initial term. By Point 1, $\sizep\run{\esym,\seasym_3} =\bigo(\sizep\run\msym)$. Then $\sizep\run{\seasym_1,\seasym_2} \in \bigo(\size{\tm}\cdot(\sizep\run{\esym,\seasym_2} +1))=\bigo(\size{\tm}\cdot(\sizep\run\msym +1))$.\qed
\end{enumerate}
\end{proof}

\paragraph{Cost of Single Transitions.} Lastly, we need some assumptions on how the Sliced POM can be concretely implemented. Transition $\seasym_2$ can evidently be done in $\bigo(1)$. Our OCaml implementation---overviewed in \withproofs{\refapp{app-implementation}}\withoutproofs{Appendix A of the technical report \cite{techreport}}---represents variables as memory locations and variable occurrences as pointers to those locations, obtaining random access to environment entries in $\bigo(1)$.\footnote{Assuming that memory accesses take $\bigo(1)$ is an idealized abstraction. In real computer architectures, that cost is highly dependent on where the data is stored, but it is bounded nonetheless.
To be theoretically precise, the cost of accessing an \emph{unbound} memory should be instead assumed to be logarithmic in the size of the memory in use. In the analyses of polynomial algorithms, and of abstract machines in particular, it is standard to assume an underlying random access machine model with constant-time access, since the logarithmic factor is somewhat negligible. The index of the location does instead play a more relevant role in the study of lower time and space complexities.} Therefore, also transition $\seasym_1$ can be done in $\bigo(1)$. The cost of transition $\esym$ is bound by the size of the value to copy, itself bound by the sub-term property. The cost of transitions $\msym$ and $\seasym_3$ is bound by the size of the term to rename, itself bound by the sub-term property.  The next lemma sums it up.

\begin{lemma}[Cost of single transitions]
  Let $\compilrel\tm\state \tomach^*\statetwo$ be a Sliced POM run. Implementing $\seasym_1$ and $\seasym_2$ transitions from $\statetwo$ costs $\bigo(1)$ each while implementing $\esym$, $\msym$, and $\seasym_3$ transitions costs $\bigo(\size\tm)$ each.\label{l:cost-single-transitions}
\end{lemma}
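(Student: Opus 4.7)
The plan is to go through each transition separately, combining three ingredients: (i) a standard representation of variables as memory locations with variable occurrences as pointers to them, so that an environment lookup $\genv(\vartwo)$ is $\bigo(1)$ and substitutions $\isub{\cdot}{\cdot}$ boil down to a linear scan; (ii) the data structures (active slice, environment, slice stack) stored as linked lists, so head manipulations are $\bigo(1)$; and (iii) the sub-term property (previous lemma) to bound the size of whatever term is traversed or copied.

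First I would dispatch the two cheap cases, $\seasym_1$ and $\seasym_2$. Both simply detach the head ES of the active slice and prepend it to $\genv$, which is a constant number of pointer manipulations. The side condition in $\seasym_2$ is a single lookup of $\vartwo$ in $\genv$ followed by a tag test on the retrieved bite, which costs $\bigo(1)$ under assumption (i). Hence both are $\bigo(1)$.

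Next I would handle $\esym$, $\msym$, and $\seasym_3$. Transition $\esym$ materializes a fresh well-bound copy $\rename{(\la\varfour\tmtwo)}$ of an abstraction retrieved from $\genv$; by the sub-term property $\size{\la\varfour\tmtwo}\leq\size\tm$, and the copy is produced by one traversal generating fresh locations for bound names, so the cost is $\bigo(\size\tm)$. Transition $\msym$ rewrites the active slice $\tmtwo\esub\var{(\la\vartwo\tmtwo')\varfour}$ into a new slice entry $\sliceentry\tmtwo\var$ on the stack and a new active slice $\tmtwo'\isub\vartwo\varfour$; the substitution $\isub\vartwo\varfour$ scans $\tmtwo'$, a sub-term of $\la\vartwo\tmtwo'$, itself bounded by $\size\tm$ via the sub-term property (crucially, only the whole concatenated active slice is exempted from the bound after an $\esym$, whereas the sub-terms sitting inside it, such as $\la\vartwo\tmtwo'$, are still bounded), giving $\bigo(\size\tm)$. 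Transition $\seasym_3$ performs $\tmthree\isub\var\varthree$ where $\tmthree$ is the term stored in the top slice entry $\sliceentry\tmthree\var$; slice entries are pushed by $\msym$ from the active slice at a point where it is a sub-term of the initial code (the sub-term property applies since the predecessor of such an $\msym$ is not an $\esym$ target for that slice component), so $\size\tmthree\leq\size\tm$ and the cost is again $\bigo(\size\tm)$.

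The main subtlety will lie in the $\msym$ case: one has to argue that although the sub-term property allows the whole active slice right after an $\esym$ to exceed $\size\tm$, the specific sub-components that the $\msym$ transition subsequently inspects (the abstraction $\la\vartwo\tmtwo'$ and its body $\tmtwo'$) are still individually bounded, so the linear scan implementing $\isub\vartwo\varfour$ stays within $\bigo(\size\tm)$. Combined with the fact that slice entries always hold sub-terms of the initial code, this is what makes the bi-linear overhead for the Sliced POM hold — which is precisely the payoff of the slicing technique.
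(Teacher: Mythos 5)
Your proposal is correct and follows essentially the same route as the paper, which justifies this lemma only by the informal paragraph preceding it: constant-time pointer/lookup operations for $\seasym_1$ and $\seasym_2$ under the memory-location representation of variables, and the sub-term property to bound the copied value in $\esym$ and the renamed scopes in $\msym$ and $\seasym_3$. Your extra care about the exception in the sub-term property after an $\esym$-transition (the whole active slice may exceed $\size\tm$, but the abstraction body inspected by the subsequent $\msym$ and the term pushed onto the slice stack are still individually bounded) is a welcome sharpening of the paper's argument rather than a deviation from it.
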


Putting all together, we obtain our main result: a bilinear bound for the Sliced POM, showing that it is an efficient machine for the right strategy.
\begin{theorem}[Sliced POM is bi-linear]
Let $\run: \compilrel\tm\state \tomach^*\statetwo$ be a Sliced POM run. Then $\run$ can be implemented on random access machines in $\bigo(\size{\tm}\cdot(\sizep\run\msym +1))$.$\label{thm:SlicedPOM-is-bilinear}$
\end{theorem}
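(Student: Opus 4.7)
The proof is essentially a direct composition of the two preceding lemmas: \cref{l:number-single-transitions} counts transitions by kind, and \cref{l:cost-single-transitions} bounds the per-transition implementation cost. The plan is to sum the products kind by kind.

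First, I would partition the transitions of $\run$ into three groups: the cheap search transitions $\seasym_1$ and $\seasym_2$ of cost $\bigo(1)$ each; the substitution-performing transitions $\msym$, $\esym$, $\seasym_3$, each of cost $\bigo(\size\tm)$ by the sub-term property (as packaged in \cref{l:cost-single-transitions}). The total implementation cost of $\run$ is then bounded by
\[
\bigo(1)\cdot \sizep\run{\seasym_1,\seasym_2} \;+\; \bigo(\size\tm)\cdot \sizep\run{\msym,\esym,\seasym_3}.
\]

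Next, I would plug in the bounds from \cref{l:number-single-transitions}. The first summand is $\bigo(\size\tm\cdot(\sizep\run\msym+1))$ directly by point~2 of that lemma. For the second summand, point~1 gives $\sizep\run{\esym,\seasym_3}\in\bigo(\sizep\run\msym)$, and obviously $\sizep\run\msym\in\bigo(\sizep\run\msym)$, so $\sizep\run{\msym,\esym,\seasym_3}\in\bigo(\sizep\run\msym)$, yielding the bound $\bigo(\size\tm\cdot\sizep\run\msym)$. Adding the two summands and absorbing into a single asymptotic expression gives the desired $\bigo(\size\tm\cdot(\sizep\run\msym+1))$.

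There is essentially no obstacle at this point: the two supporting lemmas were the content of the analysis, and this statement is their straightforward combination. The only minor care is to note that when $\sizep\run\msym=0$ the $+1$ in the statement absorbs the constant cost of the possibly non-empty initial burst of $\seasym_1$ and $\seasym_2$ transitions (bounded by $\bigo(\size\tm)$ via the sub-term property), and similarly absorbs the single possible trailing $\esym$-transition counted in the proof of \cref{l:number-single-transitions}. No new invariant or lemma is needed.
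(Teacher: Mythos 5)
Your proof is correct and follows exactly the paper's argument: the paper's proof is the one-line observation that the total cost is obtained by multiplying the number of each kind of transition (Lemma~\ref{l:number-single-transitions}) by its per-transition cost (Lemma~\ref{l:cost-single-transitions}) and summing. Your additional remark about the $+1$ absorbing the trailing $\esym$-transition and the initial search burst is a sensible precision that the paper leaves implicit.
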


\begin{proof}
The cost of implementing $\run$ is obtained by multiplying the number of each kind of transitions (\reflemma{number-single-transitions}) by the cost of that kind of transition (\reflemma{cost-single-transitions}), and summing over all kinds of transition.\qed
\end{proof}

%\input{09-implementation}
% !TEX root = main.tex
\section{Conclusions}
\label{sect:conclusions}
In $\l$-calculi with sharing, renaming chains are a recurrent issue that causes both time and space inefficiencies. The recently introduced positive $\l$-calculus removes renaming chains, while adding some meta-level renamings. This paper stems from the observation that the added meta-level renamings reintroduce a time inefficiency if implemented naively.

The problem is analyzed via the sub-term property, showing that the culprit is the fact that the scope of renamings is not a sub-term of the initial term. The analysis leads to the design of an optimized machine, the new Sliced POM, that removes once and for all the inefficiency. The key tool is a new decomposition in slices of the scopes of renamings, via a new stack for slices playing a role dual to that of environments. We also provide a prototype OCaml implementation of the Sliced POM, described in \withproofs{\refapp{app-implementation}}\withoutproofs{Appendix A of the technical report \cite{techreport}}.

\paragraph{Future Work.} At the theoretical level, we plan to adapt the positive $\l$-calculus and the Sliced POM to call-by-need evaluation. At the practical level, it would be interesting to see how the schema of the Sliced POM combines with other techniques such as closure conversion or skeletal call-by-need; these techniques were in fact recasted in the same abstract machine framework of the present work, and also analyzed from a complexity point of view, in two parallel works involving Accattoli and Sacerdoti Coen \cite{DBLP:journals/corr/abs-2507-15843,DBLP:conf/fscd/AccattoliMPC25}. We also suspect that the Sliced POM can be used to simplify the sophisticated machine for Strong Call-by-Value by Accattoli et al. in \cite{DBLP:conf/lics/AccattoliCC21}.

\subsubsection{\ackname} The second author is funded by the INdAM/GNCS project MARQ and the third by the ANR project RECIPROG (ANR-21-CE48-019).

% ---- Bibliography ----
\newpage
\bibliographystyle{splncs04}
\withoutproofs{\bibliography{\macrospath/biblio,local-bib}}
\withproofs{\bibliography{main.bbl}}

% ---- Appendix ----
\withproofs{\newpage
\appendix
\setboolean{appendix}{true}
% !TEX root = main.tex

\definecolor{light-gray}{gray}{0.95}

\lstset{
         language=[Objective]{Caml},
         basicstyle=\small,
         backgroundcolor=\color{light-gray},
         columns=fullflexible,
         morekeywords=[2]{Some,None,option,list,int},
         keywordstyle={[2]\color{teal}},
         identifierstyle=\color{blue},
         commentstyle=\color{orange},
         stringstyle=\color{gray}\emph,
         mathescape=true
}
%\lstset{% general command to set parameter(s)
%        basicstyle=\small\ttfamily,          % print whole listing small
%%       keywordstyle=\color{black}\bfseries,%\underbar,% underlined bold black keywords
%%       identifierstyle=,           % nothing happens
%%       commentstyle=\color{white}, % white comments
%        stringstyle=\ttfamily
%}

\section{Implementation in OCaml}
\label{sect:app-implementation}

The implementation is published on GitHub: \url{https://github.com/sacerdot/PositiveAbstractMachine}

\paragraph{Compiling and Running the Prototype.}
The \texttt{README} explains how to compile and run the prototype. The code requires a working \texttt{dune} installation and it depends on the \texttt{js\_of\_ocaml} package, since it provides two different interfaces: a Read-Eval-Print-Loop (REPL) from the command line or a Web interface, i.e. an HTML+JavaScript page that performs all the computation on the client side. The two interfaces are functionally equivalent. The concrete syntax for $\lambda$-terms that is accepted by the implementation is also explained in the \texttt{README} file. A screenshot of the Web interface is shown in Table~\ref{tbl-webint} where the
$\lambda$-term $(\lambda x.xx)((\lambda z.z)(\lambda z.z))$ is transformed into a positive $\l$-term (via the transformation studied by Accattoli and Wu in~\cite{entics:14758}), namely (the variable names generated by the implementation are all numbered variants of $v$):
\begin{center}
$v_4\esub{v_4}{(\underbrace{\la{v_1}v_5\esub{v_5}{v_1v_1}}_{\lambda x.xx})v_6}\esub{v_6}{(\underbrace{\la{v_2}v_2}_{\lambda z.z})v_7}\esub{v_7}{\underbrace{\la{v_3}v_3}_{\lambda z.z}}$
\end{center}
and then executed using the Sliced POM.

\begin{table}[t]
\includegraphics[width=\textwidth]{webinterface-new.png}
  \caption{A screenshot from the Web interface. The {\color{blue} active slice} is highlighted.\label{tbl-webint}}
\end{table}

\paragraph{Code Structure.} The implementation consists of five OCaml files. The first four, \texttt{utils.ml}, \texttt{lc.ml}, \texttt{repl.ml} and \texttt{main.ml}, are not interesting: they provide respectively utility functions, e.g. to output nicely both text and HTML, a parser and pretty-printer for $\lambda$-calculus terms, the command line REPL loop, and its equivalent for the Web interface. The final one,
\texttt{spom.ml}, is the intersting one. It declares the data types for positive terms and Sliced POM states, pretty-printing functions for them, the transformation from $\lambda$-terms to positive terms (also called \emph{crumbling}, following Accattoli et al. \cite{DBLP:conf/ppdp/AccattoliCGC19}), functions to immediately substitute a variable for another in a term, $\alpha$-renaming functions (i.e. copy), and the main loop of the Sliced POM.

\paragraph{Data Structures for Positive Terms.} Positive terms are represented by the following data structure.

\begin{lstlisting}
type term =
 | V of var
 | ESubst of term * var
and var =
 { name : int
 ; mutable subst : bite option
 ; mutable copying_to : var option
 }
and bite =
 | VV of var * var
 | AV of abst * var
 | A of abst
and abst = var * term
\end{lstlisting}

Variables are uniquely represented in memory by a single record \lstinline{var} that is referenced by every
variable occurrence (i.e. the reference $v$ occurs in \lstinline{V $v$}, \lstinline{VV($v$,$v'$)}, \lstinline{VV($v'$,$v$)}, \lstinline{AV($a$,$v$)}). When a variable is bound by an explicit
substitution, its field \lstinline{subst} is set to \lstinline{Some $b$} where $b$ is the definiens. This allows to retrieve in $\bigo(1)$ the definiens of a variable given a variable occurrence.
The term $\tm\esub\var\tmtwo$ is represented by \lstinline{ESubst($\tm$,$\var$)} since the record referenced by $\var$ contains $\tmtwo$ in its \lstinline{subst} field.

We use integer numbers for variable names to make it simpler to generate fresh names. A variable such that \lstinline{name=$n$} will be pretty-printed as $v_n$.

The \lstinline{copying_to} field is set only during $\alpha$-renaming (i.e. sharing preserving copy of terms): when a binder is encountered during copy for the first time, a new fresh variable is
generated and the original bound variable is made to point to the new fresh variable using \lstinline{copying_to}. When later an occurrence of the bound variable is found during the copy, the new variable
reference is obtained from \lstinline{copying_to}, preserving sharing.

\paragraph{Data Structures for Sliced POM States.} Machine states are represented by the following data structures, that follow closely the definitions in the paper.

\begin{lstlisting}
type env = var list
type slice = (term * var) list
type state = slice * term * env
\end{lstlisting}

Previous work on crumbled terms by Accattoli et al.~\cite{DBLP:conf/ppdp/AccattoliCGC19,DBLP:conf/lics/AccattoliCC21} avoids the OCaml \lstinline{list} type and implements environments (and other list-based data structures) by adding to variable nodes an
optional field \lstinline{next} pointing to the next entry in the list. This allows one to see the pairs \lstinline{term-env} simply as a zipper data structure where variables in the term point to
the next inner substitution and variables in the environment point to the next outer substitution. Sometimes an additional \lstinline{prev} pointer is also added, turning the lists into bidirectional lists to
allow for constant time appending, which is required for some reduction machines.

In the implementation for this paper, we preferred to use standard OCaml \lstinline{list}s to keep the code closer to the paper. Switching to the other representation would not change the computational complexity, but would
diminish the constant for space usage.

\paragraph{Main Sliced POM Loop.} We show here an excerpt of the code of the Sliced POM main loop, that implements the rules in the paper practically verbatim.

\begin{lstlisting}
let rec eval ?last_rule_used (state:state) =
 ...
 match state with
   ...
  | (sl,ESubst(t,({subst=Some (A _);_} as subst)),env) ->
     (* sea$\color{orange}{_1}$ *)
     eval ~last_rule_used:"sea$_1$" (sl,t,subst::env)
  | (_,ESubst(_,({subst=Some(VV({subst=Some (A a);_},z));_} as var)),_)
       as state ->
     (* e *)
     var.subst <- Some(AV(alpha_abs a,z)) ;
     eval ~last_rule_used:(Utils.pad 3 "e") state
  | (sl,ESubst(t,({subst=Some(AV((y,ez),w));_} as x)),env) ->
     (* m *)
     x.subst <- None ;
     eval ~last_rule_used:(Utils.pad 3 "m")
      ((t,x)::sl,replace ~what:y ~with_:w ez,env)
  ...
\end{lstlisting}

You can see that the \lstinline{alpha_abs} function is used in the exponential rule to rename an abstraction and
that \lstinline{replace} is used in the multiplicative rule to immediately substitue the variable $y$ with the
variable $w$. Those functions are simply defined by recursion on the syntax of positive terms as one would expect.

Note that all recursive calls are in tail position, making the recursive function equivalent to a simple loop.
The optional \lstinline{last_rule_used} parameter is used only for pretty-printing purposes, to show each machine transition to the user.

% !TEX root = main.tex
%%%%%%%%%%%%%%%%%%%%%
%%%%%%%%%%%%%%%%%%%%%
\section{Proofs Omitted from \refsect{right-strategy} (The Right Strategy)}

\begin{lemma}[Outer extension of inside-out is inside-out]
Let $\irctx$ be an inside-out right context. Then:\label{l:outer-extension}
\begin{enumerate}
\item $\irctx\esub\var{\vartwo\varthree}$ is an inside-out right context;\label{p:outer-extension-one}
\item If $\var\notin\afv\irctx$ then $\irctx\esub\var{\la\vartwo\tmtwo}$ is an inside-out right context.\label{p:outer-extension-two}
\end{enumerate}
\end{lemma}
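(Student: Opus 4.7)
The plan is to prove each point by induction on the structure of the inside-out context $\irctx$. The pivotal observation is an algebraic identity showing that appending an ES at the outermost position commutes with the inside-out plugging, namely
\[
\irctxtwop{\ctxhole\esub{\var'}{\bite'}}\esub\var\bite = (\irctxtwo\esub\var\bite)\ctxholep{\ctxhole\esub{\var'}{\bite'}}.
\]
This lets us rewrite the extended context $\irctx\esub\var\bite$ as an inside-out plugging of a strictly smaller sub-context $\irctxtwo\esub\var\bite$, on which the IH applies directly.

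For the base case $\irctx = \ctxhole$, the conclusion is immediate: $\ctxhole\esub\var\bite = \ctxhole\ctxholep{\ctxhole\esub\var\bite}$, and any side condition on $\ctxhole(\vartwo)$ holds vacuously. In the inductive case $\irctx = \irctxtwop{\ctxhole\esub{\var'}{\bite'}}$, for Point 1 the IH directly gives that $\irctxtwo\esub\var{\vartwo\varthree}$ is inside-out; when $\bite' = \vartwo'\varthree'$, the remaining side condition is easily checked by cases: if $\vartwo' = \var$ the lookup yields the application $\vartwo\varthree$, not an abstraction, and if $\vartwo' \neq \var$ it reduces to $\irctxtwo(\vartwo')$, not an abstraction by the structural condition inherited from $\irctx$.

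For Point 2, applying the IH to $\irctxtwo$ requires $\var \notin \afv{\irctxtwo}$. I would derive this from $\var \notin \afv\irctx$ via the inclusion $\afv{\irctxtwo} \subseteq \afv\irctx$: the innermost ES $\esub{\var'}{\bite'}$ of $\irctx$ sits to the left of every bite of $\irctxtwo$ in the linear representation, so its binder $\var'$ cannot bind any applied variable occurring in those bites, and the sets of applied free variables contributed by the bites of $\irctxtwo$ carry over unchanged to $\irctx$.

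The hard part is verifying the inside-out side condition for Point 2 in the subcase $\bite' = \vartwo'\varthree'$: one needs $(\irctxtwo\esub\var{\la\vartwo\tmtwo})(\vartwo') \neq \la\varfour\tm$, and the dangerous subcase is $\vartwo' = \var$, where the lookup would indeed return the abstraction $\la\vartwo\tmtwo$. The plan is to rule this out using the hypothesis $\var \notin \afv\irctx$: under the standard well-boundness convention, $\var$ is fresh for $\irctx$, so $\vartwo' = \var$ forces $\vartwo' \notin \dom\irctxtwo$; then the applied occurrence of $\vartwo'$ in the bite $\vartwo'\varthree'$ of $\irctx$ is unbound, giving $\var = \vartwo' \in \afv\irctx$ and contradicting the hypothesis. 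Therefore $\vartwo' \neq \var$ and the lookup reduces to $\irctxtwo(\vartwo')$, which is not an abstraction by the structural condition on $\irctx$. A side remark worth verifying is that the freshness of $\var$ used here is compatible with the ambient invariants of the calculus, but since the statement is intended to be read modulo $\alpha$, this can be ensured by a harmless renaming.
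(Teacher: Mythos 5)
Your proof is correct and takes essentially the same route as the paper's: structural induction on the inside-out context, using the commutation of hole-plugging with outermost ES-appending, plus a case analysis on whether the looked-up variable coincides with the newly added binder. You are in fact slightly more explicit than the paper on the subcase $\vartwo'=\var$ of Point 2, where the paper asserts that $\var\notin\afv\irctx$ alone forces $\vartwo'\neq\var$, while you correctly observe that one also needs $\var\notin\dom{\irctxtwo}$ (from the well-boundness convention) so that a hypothetical applied occurrence of $\var$ would indeed survive into $\afv\irctx$ and yield the contradiction.
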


\begin{proof}
\hfill
\begin{enumerate}
\item By induction on $\irctx$. The base case is obvious.
\begin{itemize} 
\item $\irctx = \irctxtwop{\ctxhole\esub{\var'}{\vartwo'\varthree'}}$ with $\irctxtwo(\vartwo')\neq \la\varfour\tm$. By \ih, $\irctxtwo\esub\var{\vartwo\varthree}$ is an inside-out right context. Then $\irctxtwop{\ctxhole\esub{\var'}{\vartwo'\varthree'}}\esub\var{\vartwo\varthree}$ is an inside-out right context.

\item $\irctx = \irctxtwop{\ctxhole\esub{\var'}{\la{\vartwo'}\tmtwo}}$. By \ih, $\irctxtwo\esub\var{\vartwo\varthree}$ is an inside-out right context. Then $\irctxtwop{\ctxhole\esub{\var'}{\la{\vartwo'}\tmtwo}}\esub\var{\vartwo\varthree}$ is an inside-out right context.
\end{itemize}

\item By induction on $\irctx$. The base case is obvious.
\begin{itemize} 
\item $\irctx = \irctxtwop{\ctxhole\esub{\var'}{\vartwo'\varthree'}}$ with $\irctxtwo(\vartwo)\neq \la\varfour\tm$. Since $\var\notin\afv\irctxtwo$, by \ih we obtain that $\irctxtwo\esub\var{\la\vartwo\tmtwo}$  is an inside-out right context. By hypothesis, $\irctxtwo(\vartwo)\neq \la\varfour\tm$, and the hypothesis $\var\notin\afv\irctx$ implies that $\vartwo'\neq\var$, so that $(\irctxtwo\esub\var{\la\vartwo\tmtwo})(\vartwo)\neq \la\varfour\tm$. Then $\irctxtwop{\ctxhole\esub{\var'}{\vartwo'\varthree'}}\esub\var{\la\vartwo\tmtwo}$  is an inside-out right context.

\item $\irctx = \irctxtwop{\ctxhole\esub{\var'}{\la{\vartwo'}\tmtwo}}$. Since $\var\notin\afv\irctxtwo$, by \ih we obtain that $\irctxtwo\esub\var{\la\vartwo\tmtwo}$  is an inside-out right context. Then $\irctxtwop{\ctxhole\esub{\var'}{\la{\vartwo'}\tmtwo}}\esub\var{\la\vartwo\tmtwo}$ is an inside-out right context.\qed
\end{itemize}
\end{enumerate}
\end{proof}

\begin{lemma}[Inner extension of outside-in is outside-in]
Let $\orctx$ be an outside-in right context. Then:\label{l:inner-extension}
\begin{enumerate}
\item If $\orctx(\vartwo) \neq \la\varfour\tm$ then $\orctxp{\ctxhole\esub\var{\vartwo\varthree}}$ is an outside-in right context;\label{p:inner-extension-one}
\item $\orctxp{\ctxhole\esub\var{\la\vartwo\tmtwo}}$ is an outside-in right context.\label{p:inner-extension-two}
\end{enumerate}
\end{lemma}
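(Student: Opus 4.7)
The plan is to prove both points by induction on the structure of $\orctx$ as an outside-in right context (per \refdef{right-ctx}). The base case $\orctx=\ctxhole$ holds for both: $\ctxhole\esub\var{\vartwo\varthree}$ is outside-in by the unconditional application clause, and $\ctxhole\esub\var{\la\vartwo\tmtwo}$ is outside-in by the abstraction clause since $\var\notin\afv\ctxhole=\emptyset$ vacuously.

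For the inductive step, $\orctx$ has the form $\orctxtwo\esub\varp{\vartwop\varthreep}$ or $\orctxtwo\esub\varp{\la\vartwop\tmtwop}$ (the latter with side-condition $\varp\notin\afv\orctxtwo$). The application form is routine: the outside-in application clause has no side-condition, so one applies the \ih to $\orctxtwo$ (for point 1 noting that $\orctxtwo(\vartwo)\neq\la\varfour\tm$ follows from $\orctx(\vartwo)\neq\la\varfour\tm$, since the outermost ES, whose bite is an application, cannot resolve $\vartwo$ to an abstraction) and reattaches $\esub\varp{\vartwop\varthreep}$ on the outside. The abstraction form is the delicate one. After applying the \ih to $\orctxtwo$, reattaching the outer ES via the outside-in abstraction clause requires the freshness condition $\varp\notin\afv(\orctxtwop{\ctxhole\esub\var{\vartwo\varthree}})$ in point 1, and $\varp\notin\afv(\orctxtwop{\ctxhole\esub\var{\la\vartwo\tmtwo}})$ in point 2.

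The main obstacle is verifying these freshness conditions, which I plan to address via an auxiliary observation proved by a separate induction on $\evctx$:
\[
\afv(\evctxp{\ctxhole\esub\var{\la\vartwo\tmtwo}})\ \subseteq\ \afv\evctx,
\qquad
\afv(\evctxp{\ctxhole\esub\var{\vartwo\varthree}})\ =\ \begin{cases}\afv\evctx & \text{if } \evctx(\vartwo) \text{ is defined,}\\ \afv\evctx\cup\{\vartwo\} & \text{otherwise.}\end{cases}
\]
For point 2, the first inclusion combined with $\varp\notin\afv\orctxtwo$ directly discharges the freshness condition. For point 1, the troublesome configuration would be $\varp=\vartwo$ with $\orctxtwo(\vartwo)$ undefined, but in that case $\orctx(\vartwo)$ would resolve to the abstraction $\la\vartwop\tmtwop$, contradicting the hypothesis $\orctx(\vartwo)\neq\la\varfour\tm$. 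In all remaining configurations, either $\varp\neq\vartwo$ (so $\varp\notin\afv\orctxtwo\cup\{\vartwo\}$) or $\orctxtwo(\vartwo)$ is defined (so by the equation the afv equals $\afv\orctxtwo$, which excludes $\varp$ by the outside-in side-condition on $\orctx$), closing the inductive step.
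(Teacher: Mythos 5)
Your proof is correct and follows essentially the same route as the paper's: a structural induction on $\orctx$ that reattaches the outermost ES, using the hypothesis $\orctx(\vartwo)\neq\la\varfour\tm$ to rule out $\varp=\vartwo$ in the delicate case where the outer ES carries an abstraction. The only difference is that you isolate, as an explicit auxiliary induction, how $\afv{\evctx}$ changes under inner extension in order to discharge the side condition $\varp\notin\afv{\orctxtwop{\ctxhole\esub\var\bite}}$; the paper's proof uses this bookkeeping fact tacitly.
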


\begin{proof}
\hfill
\begin{enumerate}
\item By induction on $\orctx$. The base case is obvious.
\begin{itemize}
\item $\orctx = \orctxtwo\esub{\var'}{\vartwo'\varthree'}$. By \ih, $\orctxtwop{\ctxhole\esub\var{\vartwo\varthree}}$ is an outside-in right context. Then $\orctxtwop{\ctxhole\esub\var{\vartwo\varthree}}\esub{\var'}{\vartwo'\varthree'}$ is an outside-in right context.
\item $\orctx = \orctxtwo\esub{\var'}{\la{\vartwo'}\tmtwo}$ with $\var' \notin \afv\orctxtwo$. By \ih, $\orctxtwop{\ctxhole\esub\var{\vartwo\varthree}}$ is an outside-in right context. By hypothesis, $\orctx(\vartwo) \neq \la\varfour\tm$, which implies $\var' \neq \vartwo$. Then $\orctxtwop{\ctxhole\esub\var{\vartwo\varthree}}\esub{\var'}{\la{\vartwo'}\tmtwo}$ is an outside-in right context.
\end{itemize}
\item By induction on $\orctx$. The base case is obvious.
\begin{itemize}
\item $\orctx = \orctxtwo\esub{\var'}{\vartwo'\varthree'}$. By \ih, $\orctxtwop{\ctxhole\esub\var{\la\vartwo\tmtwo}}$ is an ouside-in right context. Then $\orctxtwop{\ctxhole\esub\var{\la\vartwo\tmtwo}}\esub{\var'}{\vartwo'\varthree'}$ is an outside-in right context.
\item $\orctx = \orctxtwo\esub{\var'}{\la{\vartwo'}\tmtwo}$ with $\var' \notin \afv\orctxtwo$. By \ih, $\orctxtwop{\ctxhole\esub\var{\la\vartwo\tmtwo}}$ is an ouside-in right context. Then $\orctxtwop{\ctxhole\esub\var{\la\vartwo\tmtwo}}\esub{\var'}{\la{\vartwo'}\tmtwo}$ is an outside-in right context.\qed
\end{itemize}
\end{enumerate}
\end{proof}

%%%%%%
\gettoappendix{l:right-ctxs-coincide}

\begin{proof}
\applabel{l:right-ctxs-coincide}\hfill
\begin{itemize}
\item \emph{Outside-in are inside-out}. Let $\orctx$ be an outside-in right context. By induction on $\orctx$. The base case is obvious.
\begin{itemize}
\item $\orctx = \orctxtwo\esub\var{\vartwo\varthree}$. By \ih, $\orctxtwo$ is an inside-out right context. By \reflemmap{outer-extension}{one}, $\orctxtwo\esub\var{\vartwo\varthree}$ is an inside-out right context.
\item $\orctx = \orctxtwo\esub\var{\la\vartwo\tmtwo}$ with $\var\notin\afv\orctx$. By \ih, $\orctxtwo$ is an inside-out right context. By \reflemmap{outer-extension}{two}, $\orctxtwo\esub\var{\la\vartwo\tmtwo}$ is an inside-out right context.
\end{itemize}

\item \emph{Inside-out are outside-in}. Let $\irctx$ be an inside-out right context. By induction on $\irctx$. The base case is obvious.
\begin{itemize}
\item $\irctx = \irctxtwop{\ctxhole\esub\var{\vartwo\varthree}}$ with $\irctxtwo(\vartwo) \neq \la\varfour\tm$. By \ih, $\irctxtwo$ is an outside-in right context. By \reflemmap{inner-extension}{one}, $\irctxtwop{\ctxhole\esub\var{\vartwo\varthree}}$ is an outside-in right context.
\item $\irctx = \irctxtwop{\ctxhole\esub\var{\la\vartwo\tmtwo}}$. By \ih, $\irctxtwo$ is an outside-in right context. By \reflemmap{inner-extension}{two}, $\irctxtwop{\ctxhole\esub\var{\la\vartwo\tmtwo}}$ is an outside-in right context.\qed
\end{itemize}
\end{itemize}
\end{proof}

\gettoappendix{l:right-strat-properties}
% !TEX root = ../main.tex
\begin{proof}
\applabel{l:right-strat-properties}\hfill
\begin{enumerate}
\item Let $\rctx$ and $\rctxtwo$ be the positions of the two redexes. Without loss of generality, we assume that $\rctxtwo$ strictly extends $\rctx$ inward, that is, there exists $\evctx \neq \ctxhole$ such that $\rctxtwo=\rctxp\evctx$. Then we consider the different cases for the redex of position $\rctx$ and we shall derive every time a contradiction. The only possibility left shall be that $\rctx=\rctxtwo$. Cases of the redexes:
\begin{itemize}
\item $\rctx$ is the position of a $\rsym\esym$-redex. Then $\rctxtwo=\evctxtwop{\evctxthreep{\evctxfour\esub{\var}{\vartwo\varthree}}\esub\vartwo{\la\varfour\tm}}$, which is not a right context because $\var\in\afv{\evctxthreep{\evctxfour\esub{\var}{\vartwo\varthree}}}$. Absurd.

\item $\rctx$ is the position of a $\rsym\msym$-redex. Then $\rctxtwo=\evctxtwop{\evctxthree\esub\var{(\la\vartwo\tm)\varthree}}$, which is not a right context because no right contexts contains ESs of shape $\esub\var{(\la\vartwo\tm)\varthree}$.
\end{itemize}

\item Let $\evctx$ be the position of $\tm\topos \tmtwo$. If $\evctx$ is a right context the statement holds. Otherwise, let $\rctx$ be the maximum right context that is a prefix of $\evctx$, that is, such that $\evctx=\rctxp\evctxtwo$ for some $\evctxtwo$. We show that it is the position of a redex. Let $\tmfour$ be the sub-term isolated by $\rctx$, that is, such that $\tm = \rctxp\tmfour$. Cases of $\tmfour$:
\begin{itemize}
\item $\tmfour = \tmfour'\esub\var{(\la\vartwo\tmfour'')\varthree}$. Then $\rctx$ is the position of a $\rsym\msym$-redex.
\item $\tmfour = \tmfour'\esub\var{\la\vartwo\tmfour''}$. Impossible, because then $\rctxp{\esub\var{\la\vartwo\tmfour''}}$ is a right context bigger than $\rctx$ and prefix of $\evctx$, against maximality of $\rctx$.
\item $\tmfour = \tmfour'\esub\var{\vartwo\varthree}$. By maximality of $\rctx$, one has that $\rctx(\vartwo)$ is an abstraction, so that $\rctx$ is the position of a $\rsym\esym$-redex.\qed
\end{itemize}
\end{enumerate}
\end{proof}

%%%%%%%%%%%%%%%%%%%%%
%%%%%%%%%%%%%%%%%%%%%
% !TEX root = main.tex
\section{Proofs omitted from \refsect{prel-machines} (Preliminaries about Abstract Machines)}
\label{sect:app-prel-machines}

\begin{lemma}[One-step simulation]
  \label{l:one-step-simulation}
  Let a machine $\mach$ and a strategy $\tostrat$ be a distillery.
  For any state $\state$ of $\mach$, if $\decode\state \tostrat\tmtwo$ then there is a state $\statetwo$ of $\mach$ such that $\state \tomachsea^*\tomachpr \statetwo$ and $\decode{\statetwo} = \tmtwo$ and such that the label of the principal transition and of the step are the same.
\end{lemma}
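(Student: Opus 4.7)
The plan is to reduce the problem to the five conditions of a distillery by exhausting search transitions first and then invoking principal projection.

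First, by condition 3 (search transitions terminate), starting from $\state$ we can perform a maximal run of search transitions $\state \tomachsea^* \state'$ where no search transition is applicable to $\state'$. By condition 2 (search transparency), $\decode{\state'} = \decode{\state}$, so the hypothesis $\decode\state \tostrat \tmtwo$ gives $\decode{\state'} \tostrat \tmtwo$.

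Next, I would show that some principal transition applies to $\state'$. The state $\state'$ cannot be final: if it were, condition 5 (halt) would give that $\decode{\state'}$ is $\tostrat$-normal, contradicting $\decode{\state'} \tostrat \tmtwo$. Since no search transition applies to $\state'$ either, the only option is that some principal transition $\state' \tomachhole{\symfont{a}} \statetwo$ is available, for some label $\symfont{a}$ and some state $\statetwo$. Combining the search sequence and this principal step yields $\state \tomachsea^* \tomachpr \statetwo$.

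It remains to check that $\decode{\statetwo} = \tmtwo$ and that the labels agree. Condition 1 (principal projection) gives $\decode{\state'} \Rew{\symfont{a}} \decode{\statetwo}$. Since $\Rew{\symfont{a}}$ is a sub-relation of $\tostrat$, we have two $\tostrat$-steps out of $\decode{\state'}$: one to $\tmtwo$ and one to $\decode{\statetwo}$. Condition 4 (determinism of $\tostrat$) forces $\tmtwo = \decode{\statetwo}$. The label match is then obtained from the same determinism, interpreting $\tostrat$ as the union of its labeled sub-relations: the labeled step from $\decode{\state'}$ is unique, so the label coming from the calculus side must coincide with $\symfont{a}$. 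The main (minor) subtlety here is precisely this last point; it requires reading determinism of $\tostrat$ as determinism of the labelled relation, which is the intended reading since every concrete step of the right strategy carries a unique label ($\msym$ or $\esym$) determined by its shape. \qed
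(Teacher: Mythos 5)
Your proof is correct and follows essentially the same route as the paper's: exhaust search transitions using termination, use search transparency to preserve the decoding, rule out finality via the halt property, then apply principal projection and determinism of $\tostrat$ to identify the targets. The extra care you take about the label match is a reasonable elaboration of a point the paper's proof leaves implicit.
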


\begin{proof}
  For any state $\state$ of $\mach$, let $\nfo{\state}$ be a normal form of $\state$ with respect to $\tomachsea$: such a state exists because search transitions terminate (\refpoint{def-overhead-terminate} of \refdef{distillery}).
  Since $\tomachsea$ is mapped on identities (by \emph{search transparency}, \ie \refpoint{def-overhead-transparency} of \refdef{distillery}), one has $\decode{\nfo{\state}} = \decode\state$.
  Since $\decode\state$ is not $\tostrat$-normal by hypothesis, the halt property (\refpoint{def-progress}) entails that $\nfo{\state}$ is not final, therefore $\state \tomachsea^* \nfo{\state} \tomachpr \statetwo$ for some state $\statetwo$, and thus $\decode\state = \decode{\nfo{\state}} \tostrat \decode{\statetwo}$ by principal projection (\refpoint{def-beta-projection}).
  By determinism of $\tostrat$ (\refpoint{def-determinism}), one obtains $\decode{\statetwo} = \tmtwo$.\qed
\end{proof}

\gettoappendix{thm:abs-impl}
\begin{proof}
\applabel{thm:abs-impl}
  According to \refdef{implem}, given a positive term $\tm$ and an initial state $\compilrel\tm\state$, we have to show that:
  \begin{enumerate}
   \item \label{p:exec-to-deriv} \emph{Runs to evaluations}: for any $\mach$-run $\run: \compilrel\tm\state \tomachine^* \statetwo$ there exists a 
$\tostrat$-evaluation $\deriv: \tm \tostrat^* \decode\statetwo$;

\item \label{p:deriv-to-exec} \emph{Evaluations to runs}: for every $\tostrat$-evaluation $\deriv: \tm \tostrat^* \tmtwo$ there exists a 
$\mach$-run $\run: \compilrel\tm\state \tomachine^* \statetwo$ such that $\decode\statetwo = \tmtwo$;
  \end{enumerate}
  Plus the principal matching constraint that shall be evident by the principal projection requirement and how the proof is built.

  \paragraph{Proof of \refpoint{exec-to-deriv}.}  By induction on $\sizepr\run \in \nat$. Cases:
  \begin{itemize}
  \item $\sizepr\run = 0$. Then $\run \colon \compilrel\tm\state \tomachsea^* \statetwo$ and hence $\decode{\state} = \decode\statetwo$ by search transparency (\refpoint{def-overhead-transparency} of \refdef{distillery}).
  Moreover, $\tm = \decode{\state}$ since decoding is the inverse of initialization, therefore the statement holds with respect to the empty evaluation $\deriv$ with starting and end term $\tm$.
  
\item $\sizepr\run > 0$. Then, $\run \colon \compilrel\tm\state \tomachine^* \statetwo$ is the concatenation of a run $\runtwo \colon \compilrel\tm\state \tomachine^* \statethree$ followed by a run $\runthree \colon \statethree \tomachpr \statefour \tomachsea^* \statetwo$.
  By \ih applied to $\runtwo$, there exists an evaluation $\derivtwo \colon \tm \tostrat^*  \decode\statetwo$ satisfying the principal matching constraint.
  By principal projection (\refpoint{def-beta-projection} of \refdef{distillery}) and search transparency (\refpoint{def-overhead-transparency} of \refdef{distillery}) applied to $\runthree$, one obtains a one-step evaluation $\derivthree \colon \decode\statetwo \tostrat   \decode\state$ having the same principal label of the transition. 
Concatenating $\derivtwo$ and $\derivthree$, we obtain an evaluation $\derivfour \colon \tm  \tostrat^*  \decode\statetwo\tostrat  \decode\state$.
%Postponing structural equivalence (\refprop{streq-is-a-strong-bisim}) we obtain the required evaluation $\deriv \colon \tm  \tostrat^*  \decode\state$, which satisfies the principal matching constraint.
 \end{itemize}
 
  \paragraph{Proof of \refpoint{deriv-to-exec}.}  By induction on $\size\deriv \in \nat$. Cases:
  \begin{itemize}
  \item $\size\deriv = 0$. Then $\tm = \tmtwo$.
  Since decoding is the inverse of initialization, one has $\decode\state = \tm$.
  Then the statement holds with respect to the empty (\ie without transitions) run $\run$ with initial (and final) state $\state$.
  
 \item  $\size\deriv > 0$. Then, $\deriv\colon \tm \tostrat^* \tmtwo$ is the concatenation of an evaluation $\derivtwo \colon \tm \tostrat^* \tmtwop$ followed by the step $\tmtwop \tostrat \tmtwo$.
  By \ih, there exists a $\mach$-run $\runtwo\colon \compilrel\tm\state \tomachine^* \statethree$ such that $\decode\statethree  = \tmtwop$ and verifying the principal matching constraint.
  Since $\decode\statethree  = \tmtwop\tostrat \tmtwo$.
By one-step simulation (\reflemma{one-step-simulation}), there is a state $\statetwo$ of $\mach$ such that $\statethree \tomachsea^*\tomachpr \statetwo$ and $\decode\statetwo = \tmtwo$, preserving the label of the step/transition.
  Therefore, the run $\run \colon \compilrel\tm\state \tomachine^*\statethree \tomachsea^*\tomachpr \statetwo$ satisfies the statement.\qed
  \end{itemize}
\end{proof}

%%%%%%%%%%%%%%%%%%%%%
%%%%%%%%%%%%%%%%%%%%%
\section{Proofs Omitted from \refsect{machine} (The Sliced Positive Machine)}

\gettoappendix{thm:distillery}
% !TEX root = ../main.tex
\begin{proof}
\applabel{thm:distillery}\hfill
\begin{enumerate}
  \item 
  \begin{enumerate}
    \item If $\state = \threestate\slice{\tm\esub\var{\vartwo\varthree}}{\genv} \tomache \threestate\slice{\tm\esub\var{(\la\varfour\tmtwo)\varthree}}{\genv} = \statetwo$ with $\genv(\vartwo)= \la\varfour\tmtwo$ then $\genv = \genvtwo\cons\esub\vartwo{\la\varfour\tmtwo}\cons\genvthree$ for some $\genvtwo$ and $\genvthree$. We proceed by induction on $\slice$. Cases:
    \begin{itemize}
      \item \emph{Empty slice stack}, \ie $\slice = \emslice$. We have: \[
      \begin{array}{rll}
      &&\decode{\threestate\emslice{\tm\esub\var{\vartwo\varthree}}{\genvtwo\cons\esub\vartwo{\la\varfour\tmtwo}\cons\genvthree}} 
      \\[3pt]
      & = & \decodep\genvthree{\decodep\genvtwo{\tm\esub\var{\vartwo\varthree}}\esub\vartwo{\la\varfour\tmtwo}} 
      \\[3pt]
      & \toe & \decodep\genvthree{\decodep\genvtwo{\tm\esub\var{(\la\varfour\tmtwo)\varthree}}\esub\vartwo{\la\varfour\tmtwo}}  
      \\[3pt]
      & = & \decode{\threestate\emslice{\tm\esub\var{\rename{(\la\varfour\tmtwo)}\varthree}}{\genvtwo\cons\esub\vartwo{\la\varfour\tmtwo}\cons\genvthree}}
      \end{array} \]
      Note that the exponential step applies because of:
      \begin{itemize}
      \item The well-bound invariant (\reflemmap{invariants}{bv}) that guarantees that $\genvtwo$ does not capture $\vartwo$ (because by the invariant there cannot be two ESs on the same name);
      \item The contextual read-back invariant (\reflemmap{invariants}{rb}) that guarantees that $\decode\genvtwo$ and $\decode\genvthree$ are right contexts.
      \end{itemize}
      
      %%% Exponential sub-case
      \item \emph{Non-empty slice stack}, \ie $\slice = \slicetwo \cons \sliceentry\tmthree\varfive$. 
      Let $\tm = \evctxp\varsix$.
      We have: 
       \[\begin{array}{rllr}
      \hhline{~-|-|-}
            &(\slicetwo \cons \sliceentry\tmthree\varfive &, \tm\esub\var{\vartwo\varthree} & ,\genv)
      \\[3pt]\hhline{~-|-|-}
          =  &(\slicetwo \cons \sliceentry\tmthree\varfive &, \evctxp\varsix\esub\var{\vartwo\varthree} & ,\genv)
      \\[3pt]\hhline{~-|-|-}
          = &(\slicetwo &, \evctxp{\tmthree\isub\varfive\varsix}\esub\var{\vartwo\varthree} & ,\genv)
      \\[3pt]\hhline{~-|-|-}
          (\text{by \ih})  \toe  &(\slicetwo &, \evctxp{\tmthree\isub\varfive\varsix}\esub\var{\rename{(\la\varfour\tmtwo)}\varthree} & ,\esub\var{\vartwo\varthree}\cons\genv)
      \\[3pt]\hhline{~-|-|-}
          = &(\slicetwo \cons \sliceentry\tmthree\varfive &, \evctxp\varsix\esub\var{\rename{(\la\varfour\tmtwo)}\varthree} & ,\esub\var{\vartwo\varthree}\cons\genv)        
      \\[3pt]\hhline{~-|-|-}
          = &(\slicetwo \cons \sliceentry\tmthree\varfive &, \tm\esub\var{\rename{(\la\varfour\tmtwo)}\varthree} & ,\esub\var{\vartwo\varthree}\cons\genv)        
      \end{array} \]
    \end{itemize}
    
    %%% Multiplicative
    \item Let $\state = \threestate\slice{\tm\esub\var{(\la\vartwo{\evctxp{\varthree}})\varfour}}\genv \tomachm \threestate{\slice\cons\sliceentry\tm\var}{\evctxp{\varthree}\isub{\vartwo}{\varfour}}\genv = \statetwo$. We proceed by induction on $\slice$. Cases:
    \begin{itemize}
      \item \emph{Empty slice stack}, \ie $\slice = \emslice$. We have: \[
      \begin{array}{rll}
      \decode{\threestate\emslice{\tm\esub\var{(\la\vartwo{\evctxp{\varthree}})\varfour}}\genv} & = & \decodep\genv{\tm\esub\var{(\la\vartwo{\evctxp{\varthree}})\varfour}} 
      \\[3pt]
      & \tom & \decodep\genv{\evctxp{\tm\isub\var\varthree}\isub\vartwo\varfour} 
      \\[3pt]
      & = & \decode{\threestate{\emslice}{\evctxp{\tm\isub\var\varthree}\isub\vartwo\varfour}\genv} 
      \\[3pt]
      & =_{\vartwo \notin \fv\tm} & \decode{\threestate{\emslice\cons\sliceentry\tm\var}{\evctxp{\varthree}\isub{\vartwo}{\varfour}}\genv} 
      \end{array} \]
      Note that $\vartwo\notin\fv\tm$ is ensured by the well-bound invariant (\reflemmap{invariants}{bv}). Moreover, the contextual read-back invariant (\reflemmap{invariants}{rb}) that guarantees that $\decode\genv$ is a right context.
      
      %%% Multiplicative sub-case
      \item \emph{Non-empty slice stack}, \ie $\slice = \slicetwo \cons \sliceentry\tmtwo\varfive$. Let $\tm = \evctxtwop\varsix$. We have: 
      \[\footnotesize\begin{array}{rllr}
      \hhline{~-|-|-}
            &(\slicetwo \cons \sliceentry\tmtwo\varfive &, \tm\esub\var{(\la\vartwo{\evctxp{\varthree}})\varfour} & ,\genv)
      \\[3pt]\hhline{~-|-|-}
        = & 
      (\slicetwo \cons \sliceentry\tmtwo\varfive & ,\evctxtwop\varsix\esub\var{(\la\vartwo{\evctxp{\varthree}})\varfour} & ,\genv)
      \\[3pt]\hhline{~-|-|-}
        = & 
      (\slicetwo & ,\evctxtwop{\tmtwo\isub\varfive\varsix}\esub\var{(\la\vartwo{\evctxp{\varthree}})\varfour}& ,\genv)
      \\[3pt]\hhline{~-|-|-}
        (\text{by \ih}) \tom & (\slicetwo\cons\sliceentry{\evctxtwop{\tmtwo\isub\varfive\varsix}}\var & ,\evctxp{\varthree}\isub{\vartwo}{\varfour} & ,\genv)
      \\[3pt]\hhline{~-|-|-}
       =_{\vartwo \notin \fv{\evctxtwop{\tmtwo\isub\varfive\varsix}}} & (\slicetwo & ,\evctxp{\evctxtwop{\tmtwo\isub\varfive\varsix}\isub\var\varthree}\isub\vartwo\varfour & ,\genv)
      \\[3pt]\hhline{~-|-|-}
        =_{\var \notin \fv\tmtwo} &       
      (\slicetwo \cons \sliceentry\tmtwo\varfive & ,\evctxp{\evctxtwop\varsix\isub\var\varthree}\isub\vartwo\varfour & ,\genv)
      \\[3pt]\hhline{~-|-|-}
        = &       
      (\slicetwo \cons \sliceentry\tmtwo\varfive & ,\evctxp{\tm\isub\var\varthree}\isub\vartwo\varfour & ,\genv)
      \\[3pt]\hhline{~-|-|-}
        =_{\vartwo \notin \fv\tm} &       
	 (\slicetwo \cons \sliceentry\tmtwo\varfive\cons\sliceentry\tm\var & ,\evctxp{\varthree}\isub{\vartwo}{\varfour} & ,\genv)
      \end{array} \]
    \end{itemize}
    Note that the three conditions about variables justifying the equalities are ensured by the well-bound invariant (\reflemmap{invariants}{bv}).
  \end{enumerate}
  
  %%% Search transparency
  \item Cases of search transitions:
  \begin{itemize}
    \item $\state = \threestate\slice{\tm\esub\var{\la\vartwo\tmtwo}}\genv \tomachseaone \threestate\slice\tm{\esub\var{\la\vartwo\tmtwo}\cons\genv} = \statetwo$. Cases of $\slice$:
    \begin{itemize}
      \item \emph{Empty slice stack}, \ie $\slice = \emslice$. Then:
      \[\begin{array}{lllll}
       \decode{\threestate\emslice{\tm\esub\var{\la\vartwo\tmtwo}}\genv} 
       & = & \decodep\genv{\tm\esub\var{\la\vartwo\tmtwo}} 
       \\[3pt]
       & = & \decodep{\esub\var{\la\vartwo\tmtwo}\cons\genv}{\tm} 
       & = & \decode{\threestate\emslice\tm{\esub\var{\la\vartwo\tmtwo}\cons\genv}}.
       \end{array}\]
      
      \item \emph{Non-empty slice stack}, \ie $\slice = \slicetwo \cons \sliceentry\tmthree\varthree$. Let $\tm = \evctxp\varfour$. We have 
        \[\begin{array}{rllr}
      \hhline{~-|-|-}
            &(\slicetwo \cons \sliceentry\tmthree\varthree &, \tm\esub\var{\la\vartwo\tmtwo} & ,\genv)
      \\[3pt]\hhline{~-|-|-}
          =  &(\slicetwo \cons \sliceentry\tmthree\varthree &, \evctxp\varfour\esub\var{\la\vartwo\tmtwo} & ,\genv)
      \\[3pt]\hhline{~-|-|-}
          = &(\slicetwo &, \evctxp{\tmthree\isub\varthree\varfour}\esub\var{\la\vartwo\tmtwo} & ,\genv)
      \\[3pt]\hhline{~-|-|-}
          =_{\ih}  &(\slicetwo &, \evctxp{\tmthree\isub\varthree\varfour} & ,\esub\var{\la\vartwo\tmtwo}\cons\genv)
      \\[3pt]\hhline{~-|-|-}
          = &(\slicetwo \cons \sliceentry\tmthree\varthree &, \evctxp\varfour & ,\esub\var{\la\vartwo\tmtwo}\cons\genv)        
      \\[3pt]\hhline{~-|-|-}
          = &(\slicetwo \cons \sliceentry\tmthree\varthree &, \tm & ,\esub\var{\la\vartwo\tmtwo}\cons\genv)        
      \end{array} \]
    \end{itemize}
    
    % Search transparency for sea2
    \item $\state = \threestate\slice{\tm\esub\var{\vartwo\varthree}}\genv \tomachseatwo \threestate\slice\tm{\esub\var{\vartwo\varthree}\cons\genv} = \statetwo$ with either $\vartwo\notin\dom\genv$ or $\genv(\vartwo)\neq\la\varfour\tmtwo$. Cases of $\slice$:
    \begin{itemize}
      \item \emph{Empty slice stack}, \ie $\slice = \emslice$. Then:
      \[ \decode{\threestate\emslice{\tm\esub\var{\vartwo\varthree}}\genv} = \decodep\genv{\tm\esub\var{\vartwo\varthree}} = \decodep{\esub\var{\vartwo\varthree}\cons\genv}{\tm} = \decode{\threestate\emslice\tm{\esub\var{\vartwo\varthree}\cons\genv}}.\]
      
      \item \emph{Non-empty slice stack}, \ie $\slice = \slicetwo \cons \sliceentry\tmtwo\varfour$. Let $\tm = \evctxp\varfive$. We have:
      \[\begin{array}{rllr}
      \hhline{~-|-|-}
            &(\slicetwo \cons \sliceentry\tmtwo\varfour &, \tm\esub\var{\vartwo\varthree} & ,\genv)
      \\[3pt]\hhline{~-|-|-}
          =  &(\slicetwo \cons \sliceentry\tmtwo\varfour &, \evctxp\varfive\esub\var{\vartwo\varthree} & ,\genv)
      \\[3pt]\hhline{~-|-|-}
          = &(\slicetwo &, \evctxp{\tmtwo\isub\varfour\varfive}\esub\var{\vartwo\varthree} & ,\genv)
      \\[3pt]\hhline{~-|-|-}
          =_{\ih}  &(\slicetwo &, \evctxp{\tmtwo\isub\varfour\varfive} & ,\esub\var{\vartwo\varthree}\cons\genv)
      \\[3pt]\hhline{~-|-|-}
          = &(\slicetwo \cons \sliceentry\tmtwo\varfour &, \evctxp\varfive & ,\esub\var{\vartwo\varthree}\cons\genv)        
      \\[3pt]\hhline{~-|-|-}
          = &(\slicetwo \cons \sliceentry\tmtwo\varfour &, \tm & ,\esub\var{\vartwo\varthree}\cons\genv)        
      \end{array} \]
      
    \end{itemize}
    
    %%% Search transparency for sea3
    \item $\state = \threestate{\slice\cons\sliceentry\tm\var}\varthree\genv \tomachseathree \threestate{\slice}{\tm\isub\var\varthree}\genv = \statetwo$. By the very definition of read-back $\decode{\threestate{\slice\cons\sliceentry\tm\var}{\varthree}\genv} \defeq \decode{\threestate{\slice}{\tm\isub\var\varthree}\genv}$, the statement holds.
  \end{itemize}
  
  %%% Search termination.
  \item Any $\leadsto_{\symfont{sea}_1, \symfont{sea}_2, \symfont{sea}_3}$-sequence is bound by the sum of the size of the active code plus the sizes of the terms in the slice stack because:
  \begin{itemize}
  \item  $\leadsto_{\symfont{sea}_1}$ and $\leadsto_{\symfont{sea}_2}$ consume \ESs from the active code, 
  \item $\leadsto_{\symfont{sea}_3}$ consumes an entry from the slice when there are no ESs left on the active code.
  \end{itemize}
  
  %%% Halt
  \item Let $\state = \threestate\slice\tm\genv$ be a final state. First, the active code $\tm$ has be a variable $\var$. Otherwise, one among transitions $\symfont{sea}_1, \symfont{sea_2}, \esym, \msym$ would apply. Therefore, the slice stack $\slice$ has to be empty, otherwise transition $\symfont{sea}_3$ would apply.
  
  By the contextual read-back invariant (\reflemmap{invariants}{rb}), $\decode\genv$ is a right context, which implies that $\decode\state = \decodep\genv\var$ is $\topos$-normal.\qed
\end{enumerate}
\end{proof}

}
\end{document}